\newenvironment{otherqg}[1]{
  \small #1}{
}
\newcommand{\mypar}[1]{\medskip\noindent\underline{#1}. \ }
\newcommand{\counit}{\epsilon}
\newcommand{\AT}{e^{-i\pi \frac{c}{12}}}
\newcommand{\pplus}{p}
\newcommand{\pminus}{p'}
\renewcommand{\tilde}{\widetilde}
\newcommand{\q}{\mathfrak{q}}
\newcommand{\idem}{\boldsymbol{e}}
\newcommand{\Idem}{\boldsymbol{e}}
\newcommand{\vSE}{\boldsymbol{v}^{\scriptscriptstyle\searrow}}
\newcommand{\vNW}{\boldsymbol{v}^{\scriptscriptstyle\nwarrow}}
\newcommand{\vSW}{\boldsymbol{v}^{\scriptscriptstyle\swarrow}}
\newcommand{\vNE}{\boldsymbol{v}^{\scriptscriptstyle\nearrow}}
\newcommand{\NWSE}{{\scriptscriptstyle\boldsymbol{\nwarrow\!\!\kern-1pt\searrow}}}
\newcommand{\wUp}{\boldsymbol{w}^{\uparrow}}
\newcommand{\wLeft}{\boldsymbol{w}^{\scriptstyle\leftarrow}}
\newcommand{\wRight}{\boldsymbol{w}^{\scriptstyle\rightarrow}}
\newcommand{\wDown}{\boldsymbol{w}^{\downarrow}}
\newcommand{\piUp}{\boldsymbol{\pi}^{\uparrow}\!\!}
\newcommand{\piLeft}{\boldsymbol{\pi}^{\scriptstyle\leftarrow}}
\newcommand{\piRight}{\boldsymbol{\pi}^{\scriptstyle\rightarrow}}
\newcommand{\piDown}{\boldsymbol{\pi}^{\downarrow}}
\newcommand{\nilp}{\boldsymbol{w}}
\newcommand{\tXX}{\widetilde{\repX}}
\renewcommand{\hat}{\widehat}
\newcommand{\bref}[1]{\textbf{\textup{\ref{#1}}}}
\newcommand{\hSL}[1]{\widehat{s\ell}(#1)}
\newcommand{\acts}{{\rightharpoondown}}
\newcommand{\cZ}{\mathsf{Z}}
\newcommand{\id}{\mathrm{id}}
\renewcommand{\geq}{\,{\geqslant}\,}
\renewcommand{\leq}{\,{\leqslant}\,}
\newcommand{\EXT}{\mathrm{Ext}^\bullet}
\newcommand{\Ext}{\mathrm{Ext}_{\rule{0pt}{9.5pt}\UresSL2}^1}
\newcommand{\coup}[2]{\langle#1,#2\rangle} 
\newcommand{\tensor}{\otimes}
\newcommand{\vectv}[1]{|#1\rangle}
\newcommand{\lc}{\rep{C}}
\newcommand{\algA}{\mathfrak{A}}
\newcommand{\algB}{\mathfrak{B}}
\newcommand{\algI}{\mathfrak{I}}
\newcommand{\UresSL}[1]{\overline{\mathscr{U}}_{\q} s\ell(#1)}
\newcommand{\XXX}{\mathscr{U}_{\pplus,\pminus}}
\newcommand{\ffrac}[2]{\mbox{\footnotesize$\displaystyle\frac{#1}{#2}$}}
\newcommand{\half}{%
  \mathchoice{\ffrac{1}{2}}{\frac{1}{2}}{\frac{1}{2}}{\frac{1}{2}}}
\newcommand{\qbin}[2]{\mathchoice%
  {{\qbinm{#1}{#2}}}{\qbinmm{#1}{#2}}%
  {\qbinmm{#1}{#2}}{\qbinmm{#1}{#2}}}
\newcommand{\qbinm}[2]{\mbox{\footnotesize$\displaystyle
    \genfrac{[}{]}{0pt}{}{#1}{#2}$}}
\newcommand{\qbinmm}[2]{\genfrac{[}{]}{0pt}{}{#1}{#2}}
\newcommand{\Radford}{\smash[t]{\widehat{\pmb{\phi}}}}
\newcommand{\Drinfeld}{\boldsymbol{\chi}}
\newcommand{\Rmin}{R_{\mathrm{min}}}
\newcommand{\Rproj}{R_{\mathrm{proj}}}
\newcommand{\RLeft}{R_{\boxbslash}}
\newcommand{\RRight}{R_{\boxslash}}
\newcommand{\setI}{\mathcal{I}}
\newcommand{\setR}{\mathcal{I}_1}
\newcommand{\setii}{\mathscr{I}_1}
\newcommand{\vvarphi}{\hat{\pmb{\varphi}}{}}
\newcommand{\modS}{\mathscr{S}}
\newcommand{\modT}{\mathscr{T}}
\newcommand{\projP}{\mathscr{P}}
\newcommand{\modL}{\mathscr{P}^{+}}
\newcommand{\repX}{\rep{X}}
\newcommand{\rep}{\mathscr}
\newcommand{\SLiiZ}{SL(2,\oZ)}
\newcommand{\oC}{\mathbb{C}}
\newcommand{\oP}{\mathbb{P}}
\newcommand{\oZ}{\mathbb{Z}}
\newcommand{\one}{1}
\newcommand{\Tr}{\mathrm{Tr}^{\vphantom{y}}}
\newcommand{\ad}{\mathrm{Ad}}
\newcommand{\Ch}{\mathsf{Ch}}
\newcommand{\pbwd}{\boldsymbol{m}}
\newcommand{\pbwdd}{\boldsymbol{n}}
\newcommand{\comodul}{{\boldsymbol{a}}}
\newcommand{\coint}{{\boldsymbol{\Lambda}}}
\newcommand{\rint}{{\boldsymbol{\lambda}}}
\newcommand{\balance}{{\boldsymbol{g}}}
\newcommand{\sqs}{{\boldsymbol{u}}}
\newcommand{\ribbon}{{\boldsymbol{v}}}
\newcommand{\hcas}{\boldsymbol{C}}
\newcommand{\hbeta}{\beta}
\newcommand{\cheb}{U}
\numberwithin{equation}{section}
\def\@secnumfont{\bfseries}
\def\subsubsection{\@startsection{subsubsection}{3}%
  \z@{.5\linespacing\@plus.7\linespacing}{-.5em}%
  {\normalfont\bfseries}}
\def\paragraph{\@startsection{paragraph}{4}%
  \z@\z@{-\fontdimen2\font}%
  \normalfont\bfseries}
\def\subparagraph{\@startsection{subparagraph}{5}%
  \z@\z@{-\fontdimen2\font}%
  \normalfont\bfseries}
\newtheorem{lemma}[subsubsection]{Lemma}
\theoremstyle{definition}
\newtheorem{rem}[subsubsection]{Remark}
\begin{document}

\title[Quantum groups in logarithmic CFT]{%
  \vspace*{-\baselineskip}
  Factorizable ribbon quantum groups in logarithmic conformal field
  theories}

\author[Semikhatov]{A.M.~Semikhatov}%

\address{\mbox{}\kern-\parindent Lebedev Physics Institute
  \hfill\mbox{}\linebreak \texttt{ams@sci.lebedev.ru}}

\begin{abstract}
  We review the properties of quantum groups occurring as
  Kazhdan--Lusztig dual to logarithmic conformal field theory models.
  These quantum groups at even roots of unity are not quasitriangular
  but are factorizable and have a ribbon structure; the modular group
  representation on their center coincides with the representation on
  generalized characters of the chiral algebra in logarithmic
  conformal field models.
\end{abstract}

\maketitle

\thispagestyle{empty}

\section{Introduction}
The relation of quantum groups to conformal field theory, discussed
since~\cite{[AGS],[MR],[PS],[MSch]}, has been formulated in the
context of vertex-operator algebras as the Kazhdan--Lusztig
correspondence~\cite{[KL]}.  In a very broad sense (and very roughly),
it states that whenever ``something occurs'' in the representation
category of a vertex-operator algebra, ``something similar occurs'' in
the representation category of an appropriate quantum group; in other
words, there is a functor relating these two categories, although this
functor does not have to be either left- or right-exact.  In this
broad sense, the Kazhdan--Lusztig correspondence is therefore a
\textit{principle} rather than a precise statement; the details of the
functor have to be worked out in each particular case.
For rational conformal field theories, a certain complication follows
from the fact that the chosen vertex-operator-algebra representation
category is semisimple, while the quantum-group one is not, and
additional ``semisimplification'' (taking the quotient over tilting
modules) is needed to ensure the equivalence~\cite{[Fink]}.  But in
logarithmic conformal field theories~\cite{[Gurarie],[GK2]}, the
representation category is already nonsemisimple on the conformal
field theory side, and therefore no quotients need to be a priori
taken on the quantum group side.  The Kazhdan--Lusztig correspondence
extended to the logarithmic realm shows remarkable properties and, in
particular, extends to modular-group
representations~\cite{[FGST],[FGST2],[FGST-q]}.

A more ``physical'' point of view on the Kazhdan--Lusztig
correspondence originates from the observation that screening
operators that commute with a given vertex-operator algebra generate a
quantum group and, moreover, the vertex-operator algebra and the
quantum group are characterized by being each other's commutant,
\begin{equation*}
  [\text{vertex-operator algebra}, \text{quantum group}] = 0,
\end{equation*}
with each of the objects in this relation allowing reconstruction of
the other.  But this picture also applies more as a principle than as
a precise statement,\pagebreak[3] and therefore needs a clarification
as well.  First, the screenings proper generate only the
upper-triangular subalgebra of the quantum group in question; the
entire quantum group has to be reconstructed either by introducing
contour-removal operators (see~\cite{[GRaS]} and the references
therein) or, somewhat more formally, by taking Drinfeld's
double~\cite{[FGST],[FGST3]}.  Second, in seeking the commutant of a
quantum group, it must be specified \textit{where} it is sought, i.e.,
what free-field operators are considered (in particular, what are the
allowed momenta of vertex operators or whether vertex operators are
allowed at all; cf., e.g.,~\cite{[FS-D],[W2n]} in the nonlogarithmic
case).

For several logarithmic conformal field theories, the Kazhdan--Lusztig
correspondence has been shown to have very nice
properties~\cite{[FGST],[FGST2],[FGST3],[FGST-q]}, being ``improved''
compared to the rational case.  Somewhat heuristically, such an
``improvement'' may relate to the fact that the field content in a
logarithmic model is determined not by the cohomology but by the
kernel of the screening(s) (more precisely, by the kernel of a
differential constructed from the screenings; we recall that the
rational models are just the cohomology of such a differential,
cf.~\cite{[Fe],[BMP]}).  Most remarkably, the Kazhdan--Lusztig
correspondence extends to modular group representations.  We recall
that a modular group representation in a logarithmic conformal field
model is generated from the characters $(\chi_a(\tau))$ of the model
by $\modT$- and $\modS$-transformations, the latter being expressed as
\begin{equation}\label{modS-char1}
  \chi_a(-\ffrac{1}{\tau}) = \sum_{b}S_{ab}\chi_b(\tau) +
  \sum_{b'}S'_{ab'}\psi_{b'}(\tau),
\end{equation}
which involves certain functions $\psi_{b'}$, which are not
characters~\cite{[F],[FHST],[FG],[FGST3]}, with
\begin{equation}\label{modS-char2}
  \psi_{a'}(-\ffrac{1}{\tau}) = \sum_{b}S'_{a'b}\chi_b(\tau) +
  \sum_{b'}S'_{a'b'}\psi_{b'}(\tau)
\end{equation}
(the $\chi$ and $\psi$ together can be called generalized, or extended
characters, for the lack of a better name).  On the other hand, in
quantum-group terms, the general theory in~\cite{[Lyu]} (also
see~\cite{[LM],[Kerler]}), which has been developed in an entirely
different context, can be adapted to the quantum groups that are dual
to logarithmic conformal field theories, with the result that a
modular group representation is indeed defined on the quantum group
center.  This representation turns out to be \textit{equivalent} to
the representation generated from the characters.

Another instance where logarithmic conformal field theories and the
corresponding (``dual'') quantum groups show similarity is the fusion
(Verlinde) algebra$/$Grothendieck ring.  The existing data suggest
that the Grothendieck ring of the Kazhdan--Lusztig-dual quantum group
coincides with or ``is closely related to'' the fusion of the chiral
algebra representations on the conformal field theory side.  Two
remarks are in order here: first, comparing a Grothendieck ring with a
fusion algebra implies that the latter is understood ``in a
$K_0$-version,'' when all indecomposable representations are perforce
replaced with direct sums (cf.\ a discussion of this point
in~\cite{[FHST]});\footnote{Whenever indecomposable representations
  are involved, it is of course possible (and more interesting) to
  consider fusion algebras where indecomposable representations are
  treated honestly, i.e., are not replaced by the direct sum of their
  irreducible subquotients~\cite{[GK1],[G-alg]}.  The correspondence
  with quantum groups may also extend from the
  ``$K_0$$/$Grothendieck-style'' fusion to this case (also
  see~\bref{indecomposable} below).}  second, when the logarithmic
conformal field theory has a rational subtheory, the representations
of this rational theory are to be excluded from the comparison (this
is not unnatural though, cf.~\cite{[EF]}).
 
The quantum groups that have so far occurred as dual to logarithmic
conformal field theories are a quantum $s\ell(2)$ and a somewhat more
complicated quantum group, a quotient of the product of two quantum
$s\ell(2)$.  They are dual to logarithmic conformal field theories in
the respective classes of $(p,1)$ and $(\pplus,\pminus)$ models.  In
either case, the Kazhdan--Lusztig-dual quantum group is at an
\textit{even} root of unity.  In either case, the quantum group has a
set of crucial properties, which may therefore be conjectured to be
common to the quantum groups that are dual to logarithmic conformal
models.  These properties and the underlying structures are reviewed
here.  At present, their derivation is only available in a rather
down-to-earth manner, by direct calculation, which somewhat obscures
the general picture.  In what follows, we skip the calculation details
and concentrate on the final results and on the interplay of different
structures associated with the quantum group.

We thus continue the story as seen from the quantum-group side,
following the ideology and results
in~\cite{[FGST],[FGST2],[FGST3],[FGST-q]}.  The necessary excursions
to logarithmic conformal field theory
(see~\cite{[GK2],[GK3],[G-alg],[FFHST],[FHST],[FGST],[FGST3]} and the
references therein) are basically limited to what is needed to
appreciate the similarities with quantum-group structures.  When we
need to be specific (which is almost always the case, because we do
not claim any generality here), we choose the simplest of the two
basic examples, the $\UresSL2$ quantum group dual to the $(p,1)$
logarithmic conformal field theory models, but we indicate the
properties shared by the quantum group $\XXX$ dual to the
$(\pplus,\pminus)$ logarithmic models wherever possible.

The quantum group dual to the logarithmic $(p,1)$ model is $\UresSL2$
at an even root of unity
\begin{equation}\label{the-q}
  \q=\smash[t]{e^{\frac{i\pi}{p}}}
\end{equation}
The three generators $E$, $F$, and $K$ satisfy the relations
\begin{equation}\label{qsl2-comm}
  \begin{gathered}
    KEK^{-1}=\q^2E,\quad
    KFK^{-1}=\q^{-2}F,\\
    [E,F]=\ffrac{K-K^{-1}}{\q-\q^{-1}}
  \end{gathered}
\end{equation}
and the ``constraints''
\begin{equation}\label{qsl2-constr}
  E^{p}=F^{p}=0,\quad K^{2p}=\one.
\end{equation}
We note that Eqs.~\eqref{the-q}--\eqref{qsl2-comm} already imply that
$E^{p}$, $F^{p}$, and $K^{2p}$ are central, which then allows
imposing~\eqref{qsl2-constr} (but $K^p$, which is also central, is
\textit{not} set equal to unity, which makes the difference with a
smaller but more popular version, the so-called \textit{small} quantum
$s\ell(2)$).  As a result, $\UresSL2$ is $2p^3$-dimensional.
\begin{otherqg}
  The quantum group $\XXX$ dual to the $(\pplus,\pminus)$ logarithmic
  model is $2\pplus^3\pminus^3$-dimensional.  We note that the
  ``constraint'' imposed on its Cartan generator is
  $K^{2\pplus\pminus}=\one$.
\end{otherqg}

The Hopf algebra structure of $\UresSL2$ (comultiplication $\Delta$,
counit $\epsilon$, and antipode $S$) is described by
\begin{equation}\label{qsl2-hopf}
  \begin{gathered}
    \Delta(E)=\one\otimes E+E\otimes K,\quad
    \Delta(F)=K^{-1}\otimes F+F\otimes\one,\quad
    \Delta(K)=K\otimes K,\\
    \epsilon(E)=\epsilon(F)=0,\quad\epsilon(K)=1,\\
    S(E)=-EK^{-1},\quad  S(F)=-KF,\quad S(K)=K^{-1}.
  \end{gathered}
\end{equation}
The simplicity of~\eqref{qsl2-comm}--\eqref{qsl2-hopf} is somewhat
misleading.  This quantum group (as well as $\XXX$) has interesting
algebraic properties, the \textit{central} role being played by its
center.

\subsection*{Quantum group center and structures on it}
On the quantum-group side, the main arena of the Kazhdan--Lusztig
correspondence is the quantum group center~$\cZ$.  Of course, it
contains the (``quantum'') Casimir element(s) and the algebra that
they generate, but this does not exhaust the center.

The center carries an $\SLiiZ$ representation, whose
definition~\cite{[Lyu],[LM],[Kerler]} requires three types of
structure: Drinfeld and Radford maps $\Drinfeld$ and $\Radford$, and a
ribbon element~$\ribbon$.  The action of
$\modS=\left(\begin{smallmatrix}
    0&1\\
    -1&0
  \end{smallmatrix}\right)\in\SLiiZ$ on the center is given by
\begin{equation}\label{3-diagram}
  \xymatrix@=16pt{
    {}&\Ch\ar[dl]_{\Drinfeld}
    \ar[dr]^{\Radford}
    \\
    \cZ
    &{}&\cZ\ar@/^/[ll]^{\modS^{-1}}
    }
\end{equation}
where $\Ch$ is the space of $q$-characters (linear functionals
invariant under the coadjoint action), and the action of
$\modT=\left(\begin{smallmatrix}
    1&1\\
    0&1
  \end{smallmatrix}\right)\in\SLiiZ$ 
essentially by (multiplication with) the ribbon element,
\begin{equation}
  \label{eq:T-diagram}
  \cZ\xrightarrow{\quad\ribbon\quad}\cZ.
\end{equation}
(Our definition of $\Radford$ is swapped with its inverse compared to
the standard conventions.)

A possible way to look at the center is to first identify a number of
central elements associated with traces over irreducible
representations and then introduce appropriate pseudotraces.  The
(``quantum'') trace over an irreducible representation gives an
element of $\Ch$, i.e., a functional on the quantum group that is
invariant under the coadjoint representation; these invariant
functionals ($q$-characters) can then be mapped into central elements.
This does \textit{not} cover the entire center.  But then projective
quantum-group modules yield additional $q$-characters, obtained by
taking traces, informally speaking, of nondiagonal components of the
quantum group action, nondiagonal in terms of the filtration of
projective modules.  This gives a basis $\gamma_A$ in the space $\Ch$
of $q$-characters and hence a basis in the center.

Also, the Drinfeld map $\Drinfeld:\Ch\to\cZ$ is an isomorphism of
associative commutative algebras.  Therefore, the center contains (an
isomorphic image of) the Grothendieck ring of the quantum group, which
is thus embedded into a larger associative commutative
algebra.\footnote{That the center contains the image of the
  Grothendieck ring but is larger than it has a counterpart in
  logarithmic conformal field theory, where the set of chiral algebra
  characters $\chi_a$ is to be extended by other functions $\psi_{a'}$
  in order to define a modular group action~\cite{[F],[FHST],[FGST3]}
  and thus, presumably, to construct the space of torus amplitudes
  (also see~\cite{[FG]}).}

In Sec.~\ref{sec:radford}, we review the construction of the Radford
map~$\Radford$.  In Sec.~\ref{sec:modules}, we recall the necessary
facts about the (irreducible and projective) representations of the
relevant quantum groups; their Grothendieck rings are also discussed
there.  In Sec.~\ref{sec:drinfeld}, we recall the $M$[onodromy]
``matrix,'' the Drinfeld map $\Drinfeld$, and the ribbon element.
Together with the Radford map, these serve to define the modular group
action, which we finally consider in Sec.~\ref{sec:modular}.

\section{Radford map and related structures}\label{sec:radford}
We consider the Radford map $\Radford: U^*\to U$; the construction of
$\Radford$ and its inverse involves a cointegral and an integral.

\subsection{Integral and cointegral}
\subsubsection{Integral}
For a Hopf algebra $U$, a \textit{right integral}~$\rint$ is a linear
functional on $U$ satisfying
\begin{equation}\label{int-def}
  (\rint\tensor\id)\Delta(x)=\rint(x)\one\quad\forall x\in U.
\end{equation}
Such a functional exists in a finite-dimensional Hopf algebra and is 
unique up to multiplication~\cite{[Swe]}.

\begin{rem}
  The name \textit{integral} for such a $\rint\in U^*$ is related to
  the fact that~\eqref{int-def} is also the property of a
  right-invariant integral on functions on a group.  Indeed, for a
  function $f$ on a group $G$, $\Delta(f)$ is the function on $G\times
  G$ such that $\Delta(f)(x,y)=f(x y)$, $x,y\in G$.  Then the
  invariance property $\int f(?\,y) = \int f(?)$ can be written as
  $({\int}\,{\tensor}\;\id) \Delta(f) = \int f$.
\end{rem}

\subsubsection{Cointegral} The dual object to $\rint$, an integral for
$U^*$, is sometimes called a cointegral for~$U$.  We give it in the
form needed below, when it is a two-sided cointegral.\footnote{$U^*$
  is therefore assumed unimodular, which turns out to be the case for
  the quantum groups considered below.}\pagebreak[3] A~two-sided
\textit{cointegral}~$\coint$ is an element in $U$ such that
\begin{equation*}
  x\coint=\coint x =\epsilon(x)\coint\quad\forall x\in U.
\end{equation*}
Clearly, the cointegral defines an embedding of the trivial
representation of $U$ into the regular representation.  The
normalization $\rint(\coint)=1$ is typically understood.

\subsection{The Radford map}\label{sec:radford-all}
Let $U$ be a Hopf algebra with a right integral $\rint$ and a
two-sided cointegral~$\coint$.  The Radford map $\Radford:U^*\to U$
and its inverse $\Radford{}^{-1}:U\to U^*$ are given by\footnote{We
  use Sweedler's notation $\Delta(x)=\sum_{(x)}x'\,x''$ (see,
  e.g.,~\cite{[Kassel]}) with the summation symbols omitted in most
  cases; the defining property of the integral, for example, is then
  written as $\rint(x')x''=\rint(x)$.}
\begin{equation}\label{radford-def}
  \Radford(\beta)
  =
  \beta(\coint')\coint'',
  \quad
  \Radford{}^{-1}(x)=\rint(S(x)?).
\end{equation}
\begin{lemma}[\cite{[Swe],[Rad]}]\label{lemma:rad-map}
  $\Radford$ and $\Radford{}^{-1}$ are inverse to each other
  and intertwine the left actions of $U$ on $U$ and $U^*$, and
  similarly for the right actions.
\end{lemma}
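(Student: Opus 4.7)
The lemma packages two claims---bijectivity and $U$-equivariance---which interlock: I would establish equivariance first and then bootstrap bijectivity from it using a single identity that the equivariant structure essentially forces.

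\emph{Step 1: Equivariance.} With left multiplication as the left $U$-action on $U$, and on $U^*$ the natural action dual to right multiplication (so that $(y\acts\beta)(x)=\beta(x y)$ or the $S$-twisted variant dictated by the paper's convention), equivariance of $\Radford(\beta)=\beta(\coint_{(1)})\coint_{(2)}$ reduces to a Sweedler identity of the form
\begin{equation*}
\coint_{(1)}\otimes y\coint_{(2)} \;=\; S(y)\coint_{(1)}\otimes\coint_{(2)}
\end{equation*}
(or its mirror). I would derive this by applying $\Delta$ to the two-sided cointegral equation $y\coint=\coint y=\epsilon(y)\coint$ and then using the antipode axiom $y_{(1)}S(y_{(2)})=\epsilon(y)\one$ to collapse one Sweedler leg. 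The right-action version is obtained symmetrically from $\coint y=\epsilon(y)\coint$.

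\emph{Step 2: Bijectivity.} Direct computation gives
\begin{equation*}
\Radford^{-1}\bigl(\Radford(\beta)\bigr)(x)
\;=\;
\rint\bigl(S(\Radford(\beta))\,x\bigr)
\;=\;
\beta(\coint_{(1)})\,\rint\bigl(S(\coint_{(2)})\,x\bigr),
\end{equation*}
so $\Radford^{-1}\circ\Radford=\id_{U^*}$ is equivalent to the single identity
\begin{equation*}
\coint_{(1)}\,\rint\bigl(S(\coint_{(2)})\,x\bigr) \;=\; x,\qquad \forall x\in U.
\end{equation*}
This is a classical Larson--Sweedler--Radford identity: it asserts that $\Delta(\coint)$ is the resolution of the identity for the nondegenerate pairing $\langle u,v\rangle:=\rint(S(u)v)$ on $U$. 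I would verify it by a short Sweedler calculation. The baseline case $x=\one$ is just the right-integral property applied to $\coint$, namely $\rint(\coint_{(1)})\coint_{(2)}=\rint(\coint)\one=\one$; the general case follows by combining this baseline with the Step 1 identity to slide $y$ past $\Delta(\coint)$. The opposite composition $\Radford\circ\Radford^{-1}=\id_U$ then follows either by a symmetric calculation or from injectivity in the finite-dimensional setting.

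\emph{Main obstacle.} The chief difficulty is bookkeeping: the paper explicitly warns that its $\Radford$ is swapped with its inverse relative to standard references, so careful tracking is required of where $S$ versus $S^{-1}$ appears in the various Sweedler identities and of the compatibility between left and right actions. Once the identity of Step 1 is recorded in the correct form, everything else is formal manipulation, and for the concrete quantum groups $\UresSL2$ and $\XXX$ the cointegral and integral have explicit expressions that allow direct verification as a sanity check.
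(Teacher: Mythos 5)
Your proposal is correct and lands close to the paper's own argument, but it is organized around the dual identity. The paper's workhorse is an invariance property of the \emph{integral}, $\rint(xy')y''=\rint(x'y)S^{-1}(x'')$, derived from $(\rint\tensor\id)\Delta(x)=\rint(x)\one$; from it the paper verifies both compositions directly, and only afterwards proves equivariance by pushing $\Delta$ through the cointegral equation. You instead put the \emph{cointegral} identity $\coint'\otimes y\coint''=S(y)\coint'\otimes\coint''$ front and center, get equivariance from it in one line, and then bootstrap $\Radford{}^{-1}\circ\Radford=\id$ by sliding $x$ out: $\coint'\,\rint(S(\coint'')x)=x\,\coint'\,\rint(S(\coint''))$, with the remaining composition handled by finite-dimensionality. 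That reorganization is sound and arguably cleaner than computing both compositions from scratch. One point needs repair, though: your baseline is \emph{not} ``the right-integral property $\rint(\coint')\coint''=\rint(\coint)\one$'' --- in the expression you must evaluate, the integral sits on the \emph{second} Sweedler leg and an antipode intervenes. What you actually need is $\coint'\,\rint(S(\coint''))=\one$, which follows because $\rint\circ S$ is a \emph{left} integral (so the expression equals $\rint(S(\coint))\,\one$) together with $S(\coint)=\coint$ for the two-sided cointegral and the normalization $\rint(\coint)=1$. The paper's own proof silently invokes the same fact ($\rint(S(\coint)\,?')=\rint(\coint\,?')$), so this is a shared, fixable elision rather than a defect of your route; but as written your justification of the baseline cites the wrong identity.
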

Here, the left-$U$-module structure on $U^*$ is given by
$a\acts\beta=\beta(S(a)?)$ (and on $U$, by the regular action).  In
particular, restricting to the space of $q$-characters
(see~\bref{qchar-def}) gives
\begin{equation*}
  \Radford:\Ch\to\cZ.  
\end{equation*}
\begin{proof}\small
  We first establish an invariance property of the integral,
  \begin{equation}\label{rint-rint}
    \rint(xy')y''=\rint(x'y)S^{-1}(x'').
  \end{equation}
  Indeed, $\rint(xy')y''=\rint(x'y')S^{-1}(x''')x''y''
  =\rint((x'y)')S^{-1}(x'')(x'y)''=\rint(x'y)S^{-1}(x'')$.\footnote{Here
    and in what follows, we use the definitions of the antipode and
    counit written in the form (see, e.g., \cite{[Kassel]})
    $x'S(x'')=S(x')x''=\epsilon(x)\one$ and
    $x'\epsilon(x'')=\epsilon(x')x''=x$.  Then, in particular,
    $x'S^{-1}(x''')x''=x$.}  It then follows that
  $\Radford(\Radford^{-1}(x))
  =\Radford(\rint(S(x)?))=\rint(S(x)\coint')\coint''
  \stackrel{\mbox{\tiny by\
      \eqref{rint-rint}}}{=}\rint(S(x)'\coint)S^{-1}(S(x)'')
  =\rint(\epsilon(S(x)')\coint)S^{-1}(S(x)'')
  =S^{-1}(\epsilon(S(x)')S(x)'')=x$.  Similarly, we calculate
  $\Radford^{-1}(\Radford(\beta))
  =\Radford^{-1}(\beta(\coint')\coint'')
  \beta(\coint')\rint(S(\coint'')?)
  =\beta(\rint(S(\coint'')?)\coint')
  =\beta(\rint(S(\coint)'?)S^{-1}(S(\coint)'')) \stackrel{\mbox{\tiny
      by\ \eqref{rint-rint}}}{=}\\
  \beta(\rint(S(\coint)?')?'')
  =\beta(\rint(\coint?')?'')=\beta(\rint(\epsilon(?')\coint)?'')=
  \beta(\epsilon(?')?'')=\beta$.
  
  We next show that $\Radford$ intertwines the left-$U$-module
  structures on $U^*$ and $U$.  With the left-$U$-module structure on
  $U^*$ given by $x\acts\beta=\beta(S(x)?)$, we must prove that
  $\beta(S(x)\coint')\coint'' =x\beta(\coint')\coint''$, or
  $\beta(x\coint')\coint''=S^{-1}(x)\beta(\coint')\coint''$.  But we
  have $\beta(x\coint')\coint''=\beta(\epsilon(x')x''\coint')\coint''
    =\beta((x'\coint)')S^{-1}(x'')(x'\coint)''
    =\beta(\epsilon(x')\coint')S^{-1}(x''')\epsilon(x'')\coint''
    =S^{-1}(x)\beta(\coint')\coint''$.
\end{proof}

\subsection{Traces and the Radford map}\label{traces-radford}
For any irreducible representation $\repX$ of a quantum group $U$, the
(``quantum'') trace in~\eqref{qCh} is an invariant functional on $U$,
i.e., an element of $\Ch(U)$ (see~\bref{qchar-def}).\footnote{The
  reader not inclined to follow the details of the definition
  of~$\balance$ in~\eqref{qCh} may think of it as just the element
  that makes the trace ``quantum,'' i.e., invariant under the
  coadjoint action of the quantum group.}  (But the space of
$q$-characters $\Ch$ is \textit{not} spanned by $q$-traces over
irreducible modules, as we have noted.)  The Radford map sends each of
the $\Tr_{\repX}(\balance^{-1}?)$ functionals into the center $\cZ$
of~$U$:
\begin{equation}
  \Radford:\Tr_{\repX}(\balance^{-1}?)\to\Radford(\repX)\in\cZ.
\end{equation}
It suffices to have $\repX$ range the irreducible representations
of~$U$, because traces ``see'' only irreducible subquotients in
indecomposable representations.  As long as the linear span of
$q$-traces over irreducible modules is not all of the space of
$q$-characters, the Radford-map image of irreducible representations
does not cover the center.

For any central element $a\in\cZ$, its action on an irreducible
representation~$\repX$ is given by multiplication with a scalar, to be
denoted by $a_{\repX}\in\oC$.  By the Radford map properties, we have
the relation 
\begin{equation*}
  a\,\Radford(\repX) = a_{\repX} \Radford(\repX)
\end{equation*}
in the center.  In particular, the Radford-map image of (traces over)
all irreducible representations is the annihilator of the radical in
the center.

\subsubsection{}For $\UresSL2$, it is not difficult to verify that the
right integral and the two-sided cointegral are given by
\begin{gather}
  \rint(F^j E^m K^n)
  =
  \ffrac{1}{\zeta}\,\delta_{j,p-1}\delta_{m,p-1}\delta_{n, p+1},\\
  \label{coint}
  \coint=\zeta\,F^{p-1}E^{p-1}\sum_{j=0}^{2p-1}K^j,
\end{gather}
where we choose the normalization factor $\zeta=\sqrt{\ffrac{p}{2}}\,
\ffrac{1}{([p-1]!)^2}$~\cite{[FGST]}.\footnote{In general, the
  (co)integral is defined up to a nonzero factor, but factorizable
  ribbon quantum groups offer a ``canonical'' normalization, derived
  from the condition $\modS^2=\id$ on the center; in accordance
  with~\eqref{3-diagram}, the normalization of $\modS$ is inherited
  from the normalization of $\Radford$, and hence from that of the
  cointegral.}
\begin{otherqg}{}
  For $\XXX$, the expressions for $\rint$ and $\coint$
  in~\cite{[FGST-q]} also hinge on the fact that $p-1$ is the highest
  nonzero power of the off-diagonal quantum group generators.
\end{otherqg}

\subsection{Comodulus}\label{sec:comodulus} Another general notion
that we need is that of a comodulus.  For a right integral $\rint$,
the \textit{comodulus} ``measures'' how much $\rint$ differs from a
left integral (see~\cite{[Rad-min]}): it is an element $\comodul\in U$
such that
\begin{equation*}
  (\id\tensor\rint)\Delta(x)
  =\rint(x)\comodul\quad\forall x\in U.
\end{equation*}

A simple calculation then shows that the $\UresSL2$ comodulus is
$\comodul=K^2$.
\begin{otherqg}{}
  For $\XXX$, the comodulus is $\comodul=K^{2\pplus - 2\pminus}$.
\end{otherqg}

\section{Quantum group modules: from irreducible to
  projective}\label{sec:modules}
Irreducible (simple) and projective quantum group representations are
considered below.  By general philosophy of the Kazhdan--Lusztig
duality, the irreducible quantum-group representations somehow
``correspond'' to irreducible chiral algebra representations in
logarithmic conformal models.  In particular, the Grothendieck ring is
generally related to fusion in conformal field theory.  While direct
calculation of the fusion of chiral algebra representations is
typically quite difficult, this Grothendieck-ring structure may be
considered a poor man's fusion (there is evidence that it is not
totally meaningless).  Apart from irreducible representations, their
projective covers play an important role.  To be specific, we now
describe some aspects of the representation theory in the example
of~$\UresSL2$.

\subsection{Irreducible representations and the Grothendieck ring}
\label{subsec:irrep}
There are $2p$ irreducible $\UresSL2$-representations $\repX^{\pm}_r$,
which can be conveniently labeled by the $\pm$ and $1\leq r\leq p$.
The highest-weight vector $\vectv{ r}^{\pm}$ of $\repX^{\pm}_r$ is
annihilated by $E$, and its weight is determined by $K\vectv{r}^{\pm}
= \pm \q^{r - 1} \vectv{ r}^{\pm}$.  The representation dimensions are
$\dim\repX^{\pm}_r=r$.
Some readers might find it suggestive to visualize the representations
$\repX^\pm_r$ arranged into a ``Kac table,'' a single row of boxes
labeled by $r=1,\dots,p$, each carrying a ``$+$'' and a ``$-$''
representation: $\underbrace{\boxed{~\rule{0pt}{3.6pt}}\kern-.5pt
  \boxed{~\rule{0pt}{3.6pt}}\kern-.5pt\boxed{~\rule{0pt}{3.6pt}}\dots
  \boxed{~\rule{0pt}{3.6pt}}\kern-.5pt\boxed{~\rule{0pt}{3.6pt}}}_{p}$.

We next recall that the Grothendieck ring is the free Abelian group
generated by symbols $[M]$, where $M$ ranges over all representations
subject to relations $[M]=[M']+ [M'']$ for all exact sequences $0 \to
M'\to M\to M''\to 0$.  Multiplication in the ring is induced by the
tensor product of representations, with any indecomposable module
occurring in the tensor product replaced by a sum of its simple
subquotients.

\mypar{$\UresSL2$} The Grothendieck ring of $\UresSL2$ is (rather
straightforwardly~\cite{[FGST]}) found to be given by
\begin{equation}\label{the-fusion}
  \repX^{\alpha}_{r}\repX^{\alpha'}_{s}
  =\smash[t]{\sum_{\substack{t=|r - s| + 1\\
        \mathrm{step}=2}}^{r + s - 1}}
    \tilde\repX^{\alpha\alpha'}_{t}
\end{equation}
where
\begin{equation*}
  \tilde\repX^{\alpha}_{r}
  =
  \begin{cases}
    \repX^{\alpha}_{r},&1\leq r\leq p,\\
    \repX^{\alpha}_{ 2p - r} + 2\repX^{-\alpha}_{ r - p}, & p + 1 \leq r
    \leq 2p - 1.
  \end{cases}
\end{equation*}
It can also be described in terms of Chebyshev polynomials, as the
quotient of the polynomial ring $\oC[x]$ over the ideal generated by
the polynomial
\begin{equation*}
  \hat\Psi_{2p}(x)=
  \cheb_{2p+1}(x)-\cheb_{2p-1}(x)-2,
\end{equation*}
where $\cheb_s(x)$ are Chebyshev polynomials of the second kind:
\begin{equation*}
  \cheb_s(2\cos t)=\ffrac{\sin s t}{\sin t},\quad s\geq1.
\end{equation*}
They satisfy the recurrence relations
$x\cheb_s(x)=\cheb_{s-1}(x)+\cheb_{s+1}(x)$, $s \geq 2$, with the
initial data $\cheb_1(x)=1$, $\cheb_2(x)=x$. \ Moreover, let
\begin{equation}\label{basis-P}
  P_s(x)=
  \begin{cases}
    \cheb_s(x),& 1\leq s\leq p,\\
    \half\cheb_s(x)-\half\cheb_{2p-s}(x),& p+1\leq s\leq 2p.
  \end{cases}
\end{equation}
Under the quotient map, the image of each $P_s$ coincides with
$\repX^+_{s}$ for $1\leq s\leq p$ and with $\repX^-_{s-p}$ for
$p+1\leq s\leq2p$.

The algebra in~\eqref{the-fusion} is a nonsemisimple Verlinde algebra
(commutative associative algebra with nonnegative integer structure
coefficients, see~\cite{[Fuchs]}), with a unit given by~$\repX^+_1$.
The algebra contains the ideal~$\mathsf{V}_{p+1}$ generated by
$\repX^{+}_{p-r}+\repX^{-}_{r}$ with $1\leq r\leq p-1$,
$\repX^{+}_{p}$, and~$\repX^{-}_{p}$.  The quotient over
$\mathsf{V}_{p+1}$ is a semisimple Verlinde algebra and in fact
coincides with the fusion of the unitary $\hSL2$ representations of
level~$p-2$.\footnote{It may be worth emphasizing that a Verlinde
  algebra structure involves not only an associative commutative
  structure but also a distinguished basis (the above quotient is that
  of Verlinde algebras).  In particular, the reconstruction of the
  Verlinde algebra from its block decomposition as an associative
  algebra (the structure of primitive idempotents and elements in the
  radical in the algebra) requires extra information,
  cf.~\cite{[FHST]}.}

The same algebra was derived in~\cite{[FHST]} from modular
transformations of the triplet $W$-algebra characters in logarithmic
$(p,1)$-models within a nonsemisimple generalization of the Verlinde
formula (also see~\cite{[FK]} for comparison with other derivations).

\mypar{$\XXX$}
\begin{otherqg}
  The quantum group $\XXX$ dual to the $(\pplus,\pminus)$ logarithmic
  model has $2\pplus\pminus$ irreducible representations
  $\repX^{\pm}_{r,r'}$, $1\leq r\leq\pplus$, $1\leq r'\leq\pminus$,
  with $\dim\repX^{\pm}_{r,r'}=r r'$.  They can be considered arranged
  into a ``Kac table''
  \raisebox{-14pt}{${}^{p'}\left\{\rule{0pt}{21pt}\right.$}\kern-6pt
  $\begin{array}[t]{l}
    \boxed{~\rule{0pt}{3.6pt}}\kern-.5pt\boxed{~\rule{0pt}{3.6pt}}
    \kern-.5pt\boxed{~\rule{0pt}{3.6pt}}\dots\boxed{~\rule{0pt}{3.6pt}}
    \kern-.5pt\boxed{~\rule{0pt}{3.6pt}}\\[-7pt]
    \makebox[63.6pt]{\dotfill}\\[-5pt]
    \boxed{~\rule{0pt}{3.6pt}}\kern-.5pt\boxed{~\rule{0pt}{3.6pt}}
    \kern-.5pt\boxed{~\rule{0pt}{3.6pt}}
    \dots\boxed{~\rule{0pt}{3.6pt}}
    \kern-.5pt\boxed{~\rule{0pt}{3.6pt}}\\[-5pt]
    \underbrace{\boxed{~\rule{0pt}{3.6pt}}
      \kern-.5pt\boxed{~\rule{0pt}{3.6pt}}
      \kern-.5pt\boxed{~\rule{0pt}{3.6pt}}\dots
      \boxed{~\rule{0pt}{3.6pt}} \kern-.5pt
      \boxed{~\rule{0pt}{3.6pt}}}_{p}
  \end{array}$, with each box carrying a ``$+$'' and a ``$-$''
  representation.

  The Grothendieck ring structure
  is given by~\cite{[FGST-q]}
  \begin{equation}\label{the-Groth}
    \repX^{\alpha}_{r,r'}\repX^{\beta}_{s,s'}
    =\sum_{\substack{u=|r - s| + 1\\
        \mathrm{step}=2}}^{r + s - 1}
    \sum_{\substack{u'=|r' - s'| + 1\\
        \mathrm{step}=2}}^{r' + s' - 1}
    {\tXX}^{\alpha\beta}_{u,u'},
  \end{equation}
  where
\begin{equation*}
    {\tXX}^{\alpha}_{r,r'} =
    \begin{cases}
      \repX^{\alpha}_{r,r'},&
        1\leq r\leq \pplus,\
        1\leq r'\leq \pminus,
      \\[2pt]
      \repX^{\alpha}_{2\pplus - r,r'} + 2\repX^{-\alpha}_{r - \pplus,
        r'},&
        \pplus\!+\!1\leq r\leq 2 \pplus\!-\!1,\
        1\leq r'\leq \pminus,
      \\[2pt]
      \repX^{\alpha}_{r,2\pminus - r'} + 2\repX^{-\alpha}_{r,r' -
        \pminus},&
        1\leq r\leq \pplus,\
        \pminus\!+\!1\leq r'\leq 2 \pminus\!-\!1,
      \\[2pt]
      \mbox{}\kern-3pt\begin{aligned}[b] &\repX^{\alpha}_{2\pplus - r, 2
          \pminus - r'}
        + 2\repX^{-\alpha}_{2\pplus - r, r' - \pminus}\\
        &{}+ 2\repX^{-\alpha}_{r - \pplus, 2 \pminus - r'} +
        4\repX^{\alpha}_{r - \pplus, r' - \pminus},
      \end{aligned}&
        \pplus\!+\!1\leq r\leq 2 \pplus\!-\!1,\
        \pminus\!+\!1\leq r'\leq 2 \pminus\!-\!1.
    \end{cases}
  \end{equation*}
  This algebra is a quotient of \ $\oC[x,y]$ as described
  in~\cite{[FGST-q]}.
  The radical in this nonsemisimple Verlinde algebra (with a unit
  given by $\repX^+_{1,1}$) is generated by the algebra action on
  $\repX^+_{p,p'}$; the quotient over the radical coincides with the
  fusion of the $(p,p')$ Virasoro minimal model.

  The above algebra is a viable candidate for the (``$K_0$-type'')
  fusion of $W$-algebra representations in the logarithmic
  $(p,p')$-models (see~\cite{[FGST3],[EF]}).
\end{otherqg}

\subsection{Indecomposable modules}
\subsubsection{}\label{sec:indecomp}
Irreducible quantum-group modules can be ``glued'' together to produce
indecomposable representations.\pagebreak[3] Already for $\UresSL2$,
its indecomposable representations (which have been classified, rather
directly, in~\cite{[FGST2]} or can be easily deduced from a more
general analysis in~\cite{[Erd]}) are rather numerous.  Apart from the
projective modules, to be considered separately
in~\bref{subsec:proj-mod}, indecomposable representations are given by
families of modules $\mathscr{W}^{\pm}_r(n)$,
$\mathscr{M}^{\pm}_r(n)$, and $\mathscr{O}^{\pm}_r(n,z)$ that can be
respectively represented as
\begin{equation*}
  \xymatrix@=12pt{
    \stackrel{\repX^{\pm}_{r}}{\bullet}\ar@/^/[dr]^{x^\pm_1}&
    &\stackrel{\repX^{\pm}_{r}}{\bullet}\ar@/_/[dl]_{x^\pm_2}
    \ar@/^/[dr]^{x^\pm_1}&&\dots\ar@/_/[dl]_{x^\pm_2}\ar@/^/[dr]^{x^\pm_1}&
    &\stackrel{\repX^{\pm}_{r}}{\bullet}
    \ar@/_/[dl]_{x^\pm_2}\\
    &\stackrel{\quad\repX^{\mp}_{p-r}\quad}{\bullet}&&
    \stackrel{\quad\repX^{\mp}_{p-r}\quad}{\bullet}&\dots&
    \stackrel{\quad\repX^{\mp}_{p-r}\quad}{\bullet}&
  }
\end{equation*}

\noindent
(with $1\leq r \leq p\,{-}\,1$, and integer $n\geq2$ the number of
$\repX^{\pm}_{r}$ modules),
\begin{equation*}
  \xymatrix@=12pt{
    &\stackrel{\repX^{\pm}_{r}}{\bullet}\ar@/_/[dl]_{x^\pm_2}
    \ar@/^/[dr]^{x^\pm_1}&&\dots\ar@/_/[dl]_{x^\pm_2}\ar@/^/[dr]^{x^\pm_1}&
    &\stackrel{\repX^{\pm}_{r}}{\bullet}
    \ar@/_/[dl]_{x^\pm_2}\ar@/^/[dr]^{x^\pm_1}&\\
    \stackrel{\quad\repX^{\mp}_{p-r}\quad}{\bullet}&&
    \stackrel{\quad\repX^{\mp}_{p-r}\quad}{\bullet}&\dots&
    \stackrel{\quad\repX^{\mp}_{p-r}\quad}{\bullet}&&
    \stackrel{\quad\repX^{\mp}_{p-r}\quad}{\bullet}
  }
\end{equation*}
(with $1\leq r \leq p\,{-}\,1$, and integer $n\geq2$ the number of
$\repX^{\mp}_{p-r}$ modules), and
\begin{equation*}
  \xymatrix@=12pt{
    &\stackrel{\repX^{\pm}_{r}}{\bullet}
    \ar@/_/[dl]_{x^\pm_2}\ar@/^/[dr]^{x^\pm_1}&
    &\stackrel{\repX^{\pm}_{r}}{\bullet}
    \ar@/_/[dl]_{x^\pm_2}
    &{\mbox{}\kern-20pt\dots\kern-20pt\mbox{}}
    &\stackrel{\repX^{\pm}_{r}}{\bullet}
    \ar@/^/[dr]^{x^\pm_1}
    \\
    \stackrel{\repX^{\mp}_{p-r}}{\bullet}&&
    \stackrel{\quad\repX^{\mp}_{p-r}\quad}{\bullet}&
    &{\mbox{}\kern-20pt\dots\kern-20pt\mbox{}}
    &&\stackrel{\repX^{\mp}_{p-r}}{\bullet}&
    \\
    &&&\stackrel{\repX^{\pm}_{r}}{\bullet}
    \ar@/_/[urrr]_{z_2x^\pm_2}\ar@/^/[ulll]^{z_1x^\pm_1}&
    &&
  }
\end{equation*}
(with $1\leq r \leq p\,{-}\,1$, $z=z_1:z_2\in\oC\oP^1$, and integer
$n\geq1$ the number of the $\repX^{\pm}_{r}$ modules).  The small
$x^+_i$ and $x^-_i$, $i=1,2$, are basis elements chosen in the
respective spaces
$\oC^2=\Ext(\repX^{+}_{r},$\linebreak[0]$\repX^{-}_{p-r})$ and
$\oC^2=\Ext(\repX^{-}_{p-r},\repX^{+}_{r})$; they in fact generate the
algebra $\EXT_s$ (with the Yoneda product) with the relations
\begin{equation*}
  x^+_i x^+_j=x^-_ix^-_j=x^+_1x^-_2+x^+_2x^-_1=x^-_1x^+_2+x^-_2x^+_1=0
\end{equation*}
(see~\cite{[FGST2]} for the details).

Interestingly, a very similar picture (the ``zigzag,'' although not
the ``$\mathscr{O}$'' modules) also occurred in a different
context~\cite{[GQS],[QS]}.

\subsubsection{}
The representation category decomposes into subcategories as follows.
For $\UresSL2$, the familiar (``quantum'') Casimir element
\begin{equation}\label{hCas}
  \hcas=(\q-\q^{-1})^2 EF + \q^{-1}K+\q K^{-1}
\end{equation}
satisfies the minimal polynomial relation $\Psi_{2p}(\hcas)=0$,
where~\cite{[FGST]}
\begin{equation}\label{Psi}
  \Psi_{2p}(x) =
  (x-\hbeta_0)\,(x-\hbeta_p)
  \prod_{s=1}^{p-1}(x-\hbeta_s)^2, \quad
  \hbeta_s=\q^s+\q^{-s}.
\end{equation}
This relation yields a decomposition of the representation category
into the direct sum of full subcategories $\lc(s)$ such that
$(\hcas-\hbeta_s)$ acts nilpotently on objects in $\lc(s)$.  Because
$\hbeta_s\neq \hbeta_{s'}$ for $0\leq s\neq s'\leq p$, there
are~$p\,{+}\,1$ full subcategories~$\lc(s)$ for $0\,{\leq}\,
s\,{\leq}\, p$.  Each $\lc(s)$ with $1 \leq s \leq p{-}1$ contains
precisely two irreducible modules $\repX^{+}_{s}$ and
$\repX^{-}_{p-s}$ (because the Casimir element acts by multiplication
with $\hbeta_s$ on precisely these two) and infinitely many
indecomposable modules.  The irreducible modules $\repX^{+}_{p}$ and
$\repX^{-}_{p}$ corresponding to the respective eigenvalues $\hbeta_p$
and $\hbeta_0$ comprise the respective categories $\lc(p)$ and
$\lc(0)$.

\subsection{Projective modules}\label{subsec:proj-mod}
The process of constructing the extensions stops at projective
modules\,---\,projective covers of each irreducible representation.
Taking direct sums of projective modules then gives projective covers
of all indecomposable representation.

A few irreducible representations are their own projective covers;
these are $\repX^{\pm}_{p}$ for $\UresSL2$
\begin{otherqg}{}
  and $\repX^{\pm}_{p,p'}$ for $\XXX$.
\end{otherqg}
The other irreducible representations have projective covers filtered
by several irreducible subquotients.

For $\UresSL2$, the projective cover $\projP^\pm_r$ of
$\repX^\pm_{r}$, $r=1,\dots,p-1$, can be represented as\footnote{In
  diagrams of this type, first, the arrows are directed towards
  submodules; second, it is understood that the quantum group action
  on each irreducible representation is changed in agreement with the
  arrows connecting a given subquotient with others.  This is of
  course true for the ``two-floor'' indecomposable modules considered
  above, but is even more significant for the projective modules,
  where the $\overset{\repX^{\pm}_{r}}{\bullet}{}\longrightarrow{}
  \overset{\!\!\!\repX^{\mp}_{p-r}\!\!\!}{\bullet}$ extensions alone
  do not suffice to describe the quantum group action.  Constructing
  the quantum group action there requires some more work, but is not
  very difficult for each of the quantum groups considered here, as
  explicit formulas in~\cite{[FGST],[FGST-q]} show.}
\begin{equation}\label{schem-proj}
  \xymatrix@=12pt{
    &&\stackrel{\repX^{\pm}_{r}}{\bullet}
    \ar@/^/[dl]
    \ar@/_/[dr]
    &\\
    &\stackrel{\repX^{\mp}_{p{-}r}}{\bullet}\ar@/^/[dr]
    &
    &\stackrel{\repX^{\mp}_{p{-}r}}{\bullet}\ar@/_/[dl]
    \\
    &&\stackrel{\repX^{\pm}_{r}}{\bullet}&
  }
\end{equation}
It follows that $\dim\projP^\pm_{r}=2p$.
\begin{otherqg}
  For $\XXX$, besides $2$ irreducible projective modules of dimension
  $\pplus\pminus$, there are $2(\pplus-1+\pminus-1)$ projective
  modules of dimension $2\pplus\pminus$ and $2(\pplus-1)(\pminus-1)$
  projective modules of dimension $4\pplus\pminus$
  (see~\cite{[FGST-q]}, where a diagram with 16 subquotients is also
  given).
\end{otherqg}

Regarding this picture for projective modules (as well as more
involved pictures in~\cite{[FGST-q]}), it is useful to keep in mind
that because of the periodicity in powers of $\q$, the top and the
bottom subquotients sit in the same grade (measured by eigenvalues of
the Cartan generator $K$), as do the two ``side'' subquotients.  A
picture that makes this transparent and which shows the states in
projective modules can be drawn as follows.  Taking $\UresSL2$ with
$p=5$ and choosing $r=3$ for example, we first represent the
$\repX^-_{r}=\repX^-_{3}$ and $\repX^+_{p-r}=\repX^+_{2}$ irreducible
modules as
\def\variousvert{\ifcase\xypolynode\or\mbox{}\or\mbox{}\or\,\;
  \bullet\or\,\;\bullet\or\,\;\bullet\fi}
\mbox{}\qquad$\smash{\raisebox{-40pt}{\xy /l1.2pc/:
    \xypolygon5"B"{~*{\variousvert}~><{@{}} ~:{(2,1):}}
    \ar@{-}"B5";"B4" \ar@{-}"B4";"B3" \endxy}}$
\qquad\raisebox{-30pt}{and}\qquad
\def\variousvert{\ifcase\xypolynode\or\,\;\bullet\or
  \,\;\bullet\or\mbox{}\or\mbox{}\or\mbox{}\fi} \qquad$\smash{\xy
  /l1.2pc/: \xypolygon5"B"{~*{\variousvert}~><{@{}} ~:{(2,1):}}
  \ar@{-}"B2";"B1"
  \endxy}$\\[12pt]
and then construct their extension\\[-8pt]
\begin{minipage}[t]{.2\linewidth}
  \mbox{}\kern-30pt
  \mbox{}\qquad \xy /l1.2pc/: \xypolygon5"B"{~*{\,\;\bullet}~><{@{}}
    ~:{(2,1):}} \ar@{-}"B5";"B4" \ar@{-}"B4";"B3" \ar@{-}"B2";"B1"
  \ar@{->}"B3";"B2"
  \endxy
\end{minipage}\kern-10pt
\begin{minipage}[t]{.55\linewidth}
  \vspace*{-2\baselineskip} which actually gives a Verma module (the
  arrow is directed to the submodule).  {}From this module and a
  contragredient one, we further construct the projective module
  $\modL_{2}$ as the extension
\end{minipage}
\begin{minipage}[b]{.2\linewidth}
\mbox{}\qquad\xy /l1.2pc/: {\xypolygon5"A"{~*{\,\;\bullet}~><{@{}}
    ~:{(3,1.5):}}}, \xypolygon5"B"{~*{\,\;\bullet}~><{@{}} ~:{(2,1):}}
\ar@{-}"B5";"B4" \ar@{-}"B4";"B3" \ar@{-}"B2";"B1" \ar@{->}"B3";"B2"
\ar@{-}"A5";"A4" \ar@{-}"A4";"A3" \ar@{-}"A2";"A1" \ar@{->}"A1";"A5"
\ar@{->}"A5";"B1" \ar@{->}"A2";"B3"
\endxy
\mbox{}
\end{minipage}\\
where pairs of nearby dots represent states that actually sit in the
same grade.

\subsubsection{From the Grothendieck ring to the tensor algebra}
The results in~\cite{[Erd]} go beyond the Grothendieck ring for the
quantum group closely related to $\UresSL2$: tensor products of the
indecomposable representations are evaluated there.  It follows
from~\cite{[Erd]} that the $\UresSL2$ Grothendieck
ring~\eqref{the-fusion} is in fact the result of ``forceful
semisimplification'' of the following tensor product algebra of
irreducible representations.  First, if $r + s - p\leq1$, then,
obviously, only irreducible representations occur in the
decomposition:
\begin{align*}
  \repX^{\alpha}_{r}\tensor\repX^{\beta}_{s}
  &=\bigoplus_{\substack{t=|r - s| + 1\\
      \mathrm{step}=2}}^{r + s - 1}
  \repX^{\alpha\beta}_{t}\\
  \intertext{(the sum contains $\min(r,s)$ terms).  Next, if
    $r+s-p\geq 2$ and is even, $r+s-p=2n$ with $n\geq1$, then}
  \repX^{\alpha}_{r}\tensor\repX^{\beta}_{s}
  &=\bigoplus_{\substack{t=|r - s| + 1\\
      \mathrm{step}=2}}^{2p-r-s-1} \repX^{\alpha\beta}_{t} \oplus
  \bigoplus_{a=1}^{n}\projP^{\alpha\beta}_{p+1-2a}.\\
  \intertext{Finally, if $r+s-p\geq 3$ and is odd, $r+s-p=2n+1$ with
    $n\geq1$, then} \repX^{\alpha}_{r}\tensor\repX^{\beta}_{s}
  &=\bigoplus_{\substack{t=|r - s| + 1\\
      \mathrm{step}=2}}^{2p-r-s-1} \repX^{\alpha\beta}_{t} \oplus
  \bigoplus_{a=0}^{n}\projP^{\alpha\beta}_{p-2a}.
\end{align*}
We note that in each of the last two formulas, the first sum in the
right-hand side contains $p-\max(r,s)$ terms, and therefore disappears
whenever $\max(r,s)=p$ (see~\cite{[Erd]} for the tensor products of
other modules in~\bref{sec:indecomp}).

\subsubsection{Remarks}\label{indecomposable}
\begin{enumerate}
\item It follows that the irreducible $\UresSL2$ representations
  produce only (themselves and) projective modules in the tensor
  algebra.  Because tensor products of any modules with projective
  modules decompose into projective modules, we can consistently
  restrict ourself to only the irreducible and projective modules (in
  other words, there is a subring in the tensor algebra).  This is a
  very special situation, however, specific to $\UresSL2$ (and the
  slightly larger algebra in~\cite{[Erd]}); generically,
  indecomposable representations other than the projective modules
  occur in tensor products of irreducible representations.\footnote{I
    thank V.~Schomerus for this remark and a discussion of this
    point.}
  \begin{otherqg}{}
    In particular, the true tensor algebra behind the Grothendieck
    ring in~\eqref{the-Groth} is likely to involve various other
    indecomposable modules in the product of irreducible
    representations.
  \end{otherqg}
  Already for $\UresSL2$, specifying the \textit{full} tensor algebra
  means evaluating the products of all the representations listed
  in~\bref{sec:indecomp}.

\item Reiterating the point in~\cite{[FGST2]}, we note that the
  previous remark fully applies to fusion of the chiral algebra
  (triplet $W$-algebra~\cite{[K-first],[GK2],[GK3]}) representations
  in $(p,1)$ logarithmic conformal field theory models, once the
  fusion is taken not in the $K_0$-version but with ``honest''
  indecomposable representations~\cite{[GK1],[G-alg]}.  While it is
  possible to consider such a fusion of only irreducible and
  projective $W$-algebra modules, the full fusion algebra must include
  all of the ``$\mathscr{W}$,'' ``$\mathscr{M}$,'' and
  ``$\mathscr{O}$'' indecomposable modules of the triplet algebra
  (with the last ones, somewhat intriguingly, being dependent on
  $z\in\oC\oP^1$).
\end{enumerate}

\subsubsection{Pseudotraces} Projective modules serve another,
somewhat technical but useful purpose.  It was noted
in~\bref{traces-radford} that traces over irreducible representations
do not span the entire space $\Ch$ of $q$-characters.  Projective
modules provide what is missing: they allow constructing pseudotraces
$\Tr_{\mathbb{P}}(\balance^{-1} ? \sigma)$ (for certain maps and
modules $\sigma:\mathbb{P}\to\mathbb{P}$) that together with the
traces $\Tr_{\repX}(\balance^{-1}?)$ over irreducible representations
span all of $\Ch$: a basis $\gamma_A$ in $\Ch$ can be constructed such
that with a subset of the $\gamma_A$ is given by traces over
irreducible representations and the rest by pseudotraces associated
with projective modules in each full subcategory.

The strategy for constructing the pseudotraces is as follows.  For any
(reducible) module $\mathbb{P}$ and a map
$\sigma:\mathbb{P}\to\mathbb{P}$, the functional
\begin{equation}
 \gamma{}:{}
  x\mapsto \Tr_{\mathbb{P}}(\balance^{-1} x \sigma)
\end{equation}
is a $q$-character if and only if (cf.~\eqref{Ch-def})
\begin{equation}\label{q-char-cond}
    0=\gamma(x y)- \gamma(S^2(y)x) \equiv
    \Tr_{\mathbb{P}}
    (\balance^{-1}x[y,\sigma]).
\end{equation}
It is possible to find reducible indecomposable modules $\mathbb{P}$
and maps $\sigma$ satisfying~\eqref{q-char-cond}.  This requires
taking $\mathbb{P}$ to be the projective module in a chosen full
subcategory (one of those containing more than one module).  For
$\UresSL2$, this is
\begin{equation}\label{PP-p1}
  \mathbb{P}_r=\projP^+_r\oplus\projP^-_{p-r},\quad 1\leq r\leq p-1,
\end{equation}
\begin{otherqg}{}
  and for $\XXX$ this is the direct sum
  \begin{equation}\label{PP}
    \mathbb{P}_{r,r'}
    =\projP^{+}_{r,r'}\oplus\projP^{-}_{\pplus-r,r'} 
    \oplus\projP^{-}_{r,\pminus-r'}\oplus\projP^{+}_{\pplus-r,\pminus-r'},
  \end{equation}
  plus the ``boundary'' cases where either $r=p$ or $r'=p'$, with two
  terms in the sum (here, $\projP^\pm_{r,r'}$ is the projective cover
  of the irreducible representation~$\repX^\pm_{r,r'}$).
\end{otherqg}
In all cases, $\sigma$ is a linear map that sends the bottom module in
the filtration of each projective module into ``the same'' module at a
higher level in the filtration.  Such maps are not defined uniquely
(e.g., they depend on the choice of the bases and, obviously, on the
``admixture'' of lower-lying modules in the filtration), but anyway,
taken together with the traces over irreducible representations, they
allow constructing a basis in $\Ch$.  For $\UresSL2$, there is a
single pseudotrace for each $r$ in~\eqref{PP-p1} obtained by letting
$\sigma$ send the bottom of both diamonds of type~\eqref{schem-proj}
into the top.  This gives just $p\,{-}\,1$ linearly independent
elements of~$\Ch$.
\begin{otherqg}{}
  The structure for $\XXX$ is somewhat richer, and the counting goes
  as follows~\cite{[FGST-q]}.  There are not one but three other
  copies of the bottom subquotient in each of the four projective
  modules in~\eqref{PP}.  For the $12$ parameters thus emerging, $7$
  constraints follow from~\eqref{q-char-cond}.  Of the remaining $5$
  different maps satisfying~\eqref{q-char-cond}, there is just one
  (the map to the very top) for each of the
  $\half(p\,{-}\,1)(p'\,{-}\,1)$ modules of form~\eqref{PP}, two for
  each of the $\half(p\,{-}\,1)p'$ modules, and two more for each of
  the $\half p(p'\,{-}\,1)$ modules.  This gives the total of
  \begin{equation*}
    \half(p-1)(p'-1) + (p-1)p' + p(p'-1)
  \end{equation*}
  linearly independent pseudotraces.  Together with the traces over
  $2\pplus\pminus$ irreducible representations, we thus obtain
  $\half(3\pplus\,{-}\,1)(3\pminus\,{-}\,1)$ linearly independent
  elements of~$\Ch$%
\end{otherqg}.

\subsubsection{``Radford'' basis}Radford-map images of the basis
$\gamma_A$ of traces and pseudotraces in $\Ch$ give a basis
\begin{equation*}
  \smash[t]{\Radford_A=\Radford(\gamma_A)}
\end{equation*}
in the quantum-group center~$\cZ$.  This basis plays an important role
in what follows, being one of the two special bases related by
$\modS\in\SLiiZ$.  The other special basis is associated with the
Drinfeld map considered in the next section.

\subsubsection{Projective modules and the center} Projective modules
are also a crucial ingredient in finding the quantum group center.
Central elements are in a $1:1$ correspondence with \textit{bimodule}
endomorphism of the regular representation.  We recall that viewed as
a left module, the regular representation decomposes into projective
modules, each entering with the multiplicity given by the dimension of
its simple quotient.  Generalizing this picture to a bimodule
decomposition shows that the multiplicities are in fact tensor factors
with respect to the right action.  A typical block of the bimodule
decomposition of the regular representation looks as follows: with
respect to the left action, it is a sum of projective modules in one
full subcategory, with each projective (externally) tensored with a
suitable simple module.  For $\UresSL2$, where the subquotients are
few and therefore the picture is not too complicated, it can be drawn
as~\cite{[FGST]}

\begin{small}\vspace*{-.8\baselineskip}
  \begin{equation*}
    \mbox{}\kern-62pt
    \xymatrix@=16pt{%
      *{}&{\repX^+_{r}\makebox[0pt][l]{${\boxtimes}\,\repX^+_{r}$}}
      \ar[1,-1]
      \ar[1,1]
      &*{}&*{}&*{}
      *{}&{\repX^-_{p{-}r}\makebox[0pt][l]{${\boxtimes}\,\repX^-_{p{-}r}$}}
      \ar[1,-1]
      \ar[1,1]
      \\
      {\repX^-_{p{-}r}\makebox[0pt][l]{${\boxtimes}\,\repX^+_{r}$}}
      \ar[1,1]
      &*{}&
      {\repX^-_{p{-}r}\makebox[0pt][l]{${\boxtimes}\,\repX^+_{r}$}}
      \ar[1,-1]
      &*{\quad}
      &{\repX^+_{r}\makebox[0pt][l]{${\boxtimes}\,\repX^-_{p{-}r}$}}
      \ar[1,1]
      &*{}
      &{\repX^+_{r}\makebox[0pt][l]{${\boxtimes}\,\repX^-_{p{-}r}$}}
      \ar[1,-1]
      \\
      *{}&{\repX^+_{r}\makebox[0pt][l]{${\boxtimes}\,\repX^+_{r}$}}
      &*{}&*{}&*{}
      *{}&{\repX^-_{p{-}r}\makebox[0pt][l]{${\boxtimes}\,\repX^-_{p{-}r}$}}
    }
  \end{equation*}
\end{small}%
With respect to the right action, the picture is totally symmetric,
but with the subquotients placed as above, the structure of their
extensions has to be drawn as

\begin{small}\vspace*{-.5\baselineskip}
  \begin{equation*}
    \mbox{}\kern50pt
    \xymatrix@=16pt{%
      *{}&{\makebox[0pt][r]{$\repX^+_{r}{\boxtimes}\,$}\repX^+_{r}}
      \ar@/^5pt/;[1,4]+<-12pt,8pt>
      \ar@/^8pt/;[1,6]+<-12pt,8pt>
      &*{}&*{}&*{}&*{}
      *{}&{\makebox[0pt][r]{$\repX^-_{p{-}r}{\boxtimes}\,$}\repX^-_{p{-}r}}
      \ar@/_1pt/-<10pt,5pt>;[1,-6]+<10pt,7pt>
      \ar@/^2pt/-<5pt,5pt>;[1,-4]
      \\
      {\makebox[0pt][r]{$\repX^-_{p{-}r}{\boxtimes}\,$}\repX^+_{r}}
      \ar@/^3pt/+<12pt,-4pt>;[1,6]+<-10pt,7pt>
      &*{}&
      {\makebox[0pt][r]{$\repX^-_{p{-}r}{\boxtimes}\,$}\repX^+_{r}}
      \ar@/^3pt/+<12pt,-4pt>;[1,4]+<-5pt,9pt>
      &*{}&*{}
      &{\makebox[0pt][r]{$\repX^+_{r}{\boxtimes}\,$}\repX^-_{p{-}r}}
      \ar@/^3pt/-<0pt,6pt>;[1,-4]
      &*{}
      &{\makebox[0pt][r]{$\repX^+_{r}{\boxtimes}\,$}\repX^-_{p{-}r}}
      \ar@/^6pt/-<0pt,6pt>;[1,-6]
      \\
      *{}&{\makebox[0pt][r]{$\repX^+_{r}{\boxtimes}\,$}\repX^+_{r}}
      &*{}&*{}&*{}&*{}
      *{}&{\makebox[0pt][r]{$\repX^-_{p{-}r}{\boxtimes}\,$}\repX^-_{p{-}r}}
    }
  \end{equation*}%
\end{small}%

Pictures of this type immediately yield the number of central elements
\textit{and their asso\-ciative algebra structure}.  First, each block
yields a primitive idempotent $\idem_I$, which is just the projector
on this block; second, there are maps sending $A\boxtimes B$ bimodules
into ``the same'' bimodules at lower levels, yielding nilpotent
central elements.  For $\UresSL2$, the bimodule decomposition contains
$p-1$ blocks of the above structure, plus two more given by
$\repX^+_p\boxtimes\repX^+_p$ and $\repX^-_p\boxtimes\repX^-_p$; in
each of the ``complicated'' blocks, there are two bimodule
automorphisms under which either the top $\repX^+_r\boxtimes\repX^+_r$
or the top $\repX^-_{p-r}\boxtimes\repX^-_{p-r}$ goes into the
corresponding bottom one, yielding two two central elements
$\nilp^+_r$ and $\nilp^-_{r}$ with zero products among themselves.
Therefore, the $(3p-1)$-dimensional center decomposes into a direct
sum of associative algebras as
\begin{equation*}
  \cZ_{\UresSL2}=\algI_{p}^{(1)}\oplus
  \algI_{0}^{(1)}\oplus\bigoplus_{r=1}^{p-1}\algB_{r}^{(3)},
\end{equation*}
where the dimension of each algebra is shown as a superscript.

\begin{otherqg}{}
  For $\XXX$, there are several intermediate levels in the filtration
  of projective modules, and hence the nilpotent elements are more
  numerous and have a nontrivial multiplication table
  (see~\cite{[FGST-q]} for the details); the center is
  $\half(3\pplus\,{-}\,1)(3\pminus\,{-}\,1)$-dimensional and
  decomposes into a direct sum of associative algebras as
  \begin{equation}\label{center-decomposition}
    \cZ_{\XXX}=\algI_{\pplus,\pminus}^{(1)}\oplus
    \algI_{0,\pminus}^{(1)}\oplus
    \bigoplus_{r=1}^{\pplus-1}\algB_{r,\pminus}^{(3)}
    \oplus
    \bigoplus_{r'=1}^{\pminus-1}\algB_{\pplus,r'}^{(3)}
    \oplus
    \bigoplus_{r,r'\in\setR}\!\algA_{r,r'}^{(9)},
  \end{equation}
  where the dimension of each algebra is shown as a superscript (and
  where
  $\vert\setR\vert=\half(\pplus\,{-}\,1)(\pminus\,{-}\,1)$).\linebreak[3]
  The $\algB^{(3)}$ algebras are just as in the previous formula, and
  each $\algA_{r,r'}^{(9)}$ is spanned by a primitive idempotent
  $\Idem_{r,r'}$ \textup{(}acting as identity on
  $\algA_{r,r'}^{(9)}$\textup{)} and eight radical elements
  $\vNE_{r,r'}$, $\vSW_{r,r'}$, $\vNW_{r,r'}$, $\vSE_{r,r'}$,
  $\wUp_{r,r'}$, $\wRight_{r,r'}$, $\wDown_{r,r'}$, $\wLeft_{r,r'}$
  that have the nonzero products
  \begin{alignat*}{2}
    \vNE_{r,r'}\vNW_{r,r'}&=\wUp_{r,r'},&
    \qquad
    \vNE_{r,r'}\vSE_{r,r'}&=\wRight_{r,r'},
    \\
    \vSW_{r,r'}\vNW_{r,r'}&=\wLeft_{r,r'},&
    \qquad
    \vSW_{r,r'}\vSE_{r,r'}&=\wDown_{r,r'}.
  \end{alignat*}%
\end{otherqg}

\section{Drinfeld map and factorizable and ribbon structures}
\label{sec:drinfeld}
\subsection{$M$-matrix and the Drinfeld map}\label{sec:M} For a
quasitriangular Hopf algebra $U$ with the universal $R$-matrix $R$,
the $M$-matrix is the ``square'' of the $R$-matrix, defined as
\begin{equation*}
  M=R_{21}R_{12}\in U\tensor U.
\end{equation*}
It satisfies the relations
\begin{equation}\label{M-commutes}
  \begin{split}
    (\Delta\tensor\id)(M)&=R_{32}M_{13}R_{23},\\
    M\Delta(x)&=\Delta(x) M\quad \forall x\in U.
  \end{split}
\end{equation}%
\begin{small}%
  Indeed, using \eqref{eq:quasi-2-d}, we find $(\Delta\otimes\id)(
  R_{21})= R_{32} R_{31}$ and then using~\eqref{eq:quasi-1} we obtain
  the first relation in~\eqref{M-commutes}.  Next, it follows
  from~\eqref{eq:def-prop-R-d} that $R_{21}R_{12}\Delta(x)
  =(R_{12}\Delta(x))^{\mathrm{op}}R_{12}
  =(\Delta^{\mathrm{op}}(x)R_{12})^{\mathrm{op}}R_{12}
  =\Delta(x)R_{21}R_{12}$, that is, the second relation
  in~\eqref{M-commutes}%
\end{small}.

The Drinfeld map $\Drinfeld:U^*\to U$ is defined as
\begin{equation*}
  \Drinfeld:\beta\mapsto(\beta\tensor\id)(M),
\end{equation*}
that is, if we write the $M$-matrix as
\begin{equation}\label{M-factorizable}
  M=\sum_I \pbwd_I\tensor \pbwdd_I,
\end{equation}
then $\Drinfeld(\beta)=\sum_I \beta(\pbwd_I)\pbwdd_I$.

Whenever $\Drinfeld:U^*\to U$ is an isomorphism of vector spaces, the
Hopf algebra~$U$ is called \textit{factorizable}~\cite{[RSts]}.
Equivalently, this means that $\pbwd_I$ and~$\pbwdd_I$
in~\eqref{M-factorizable} are two \textit{bases} in~$U$.

\begin{lemma}[\cite{[Drinfeld]}]
  In a factorizable Hopf algebra $U$, by restriction to $\Ch$
  \textup{(}see~\bref{qchar-def}\textup{)}, the Drinfeld map defines a
  homomorphism
  \begin{equation*}
    \Ch(U)\rightarrow\cZ(U)
  \end{equation*}
  of associative algebras.
\end{lemma}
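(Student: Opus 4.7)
The plan is to establish two facts in sequence: first, that $\Drinfeld(\beta)\in\cZ(U)$ whenever $\beta\in\Ch(U)$; and second, that $\Drinfeld(\alpha\ast\beta)=\Drinfeld(\alpha)\Drinfeld(\beta)$ for $\alpha,\beta\in\Ch(U)$, where $(\alpha\ast\beta)(x)=(\alpha\tensor\beta)\Delta(x)$ is the convolution product on $U^*$ (whose restriction to $\Ch$ is an algebra structure by a short direct check using that $\Delta$ is an algebra map and that both factors satisfy the $q$-character identity).  Both steps rest on the two relations of~\eqref{M-commutes}.

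For centrality, I would start from the relation $M\Delta(y)=\Delta(y)M$ in Sweedler form,
\begin{equation*}
  \sum_I\pbwd_Iy'\tensor\pbwdd_Iy''=\sum_Iy'\pbwd_I\tensor y''\pbwdd_I,
\end{equation*}
apply $\beta\tensor\id$, and then manipulate the two sides using the $q$-character property $\beta(ab)=\beta(S^2(b)a)$ together with $y'S(y'')=S(y')y''=\epsilon(y)\one$ to conclude $y\Drinfeld(\beta)=\Drinfeld(\beta)y$ for every $y\in U$.  The cleanest organization is to prove, directly from $M\Delta(y)=\Delta(y)M$ and without yet invoking any invariance, that $\Drinfeld$ is $U$-equivariant in the sense $\Drinfeld(y\acts\beta)=\ad(y)\Drinfeld(\beta)$ for every $\beta\in U^*$, where $(y\acts\beta)(z)=\beta(S(y')zy'')$ is the coadjoint action and $\ad(y)(z)=y'zS(y'')$ is the adjoint; one then passes to invariants, noting that the coadjoint invariants in $U^*$ are exactly $\Ch$ (for $\beta$ satisfying $\beta(ab)=\beta(S^2(b)a)$ one has $\beta(S(y')zy'')=\beta(S^2(y'')S(y')z)=\epsilon(y)\beta(z)$, since $S^2(y'')S(y')=S(y'S(y''))=\epsilon(y)\one$) while the adjoint invariants in $U$ are exactly $\cZ$.

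For multiplicativity, write $R=\sum_s r_s\tensor r^s$, so that $M=R_{21}R_{12}=\sum_{s,t}r^sr_t\tensor r_sr^t$ and
\begin{equation*}
  \Drinfeld(\alpha)\Drinfeld(\beta)
  =\sum_I\alpha(\pbwd_I)\pbwdd_I\cdot\sum_{s,t}\beta(r^sr_t)r_sr^t
  =\sum_{I,s,t}\alpha(\pbwd_I)\beta(r^sr_t)\,\pbwdd_I\,r_s\,r^t.
\end{equation*}
On the other hand, applying $(\alpha\tensor\beta\tensor\id)$ to the factorization identity $(\Delta\tensor\id)(M)=R_{32}M_{13}R_{23}$ and expanding the right-hand side as $\sum_{I,s,t}\pbwd_I\tensor r^sr_t\tensor r_s\pbwdd_Ir^t$ yields
\begin{equation*}
  \Drinfeld(\alpha\ast\beta)
  =\sum_{I,s,t}\alpha(\pbwd_I)\beta(r^sr_t)\,r_s\,\pbwdd_I\,r^t.
\end{equation*}
The two expressions differ only by the order of $\pbwdd_I$ and $r_s$ inside the sum over $I$, and the reordering is legitimate precisely because $\sum_I\alpha(\pbwd_I)\pbwdd_I=\Drinfeld(\alpha)$ is central by the first step and therefore commutes with each $r_s$.

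The substantive content sits in the centrality step, where the bookkeeping with Sweedler indices and the $S^2$ in the $q$-character condition must be handled carefully; the equivariance reformulation above makes the role of each ingredient transparent.  Multiplicativity is then essentially a mechanical consequence of the factorization formula for $(\Delta\tensor\id)(M)$ together with the centrality already established.  Notice that bijectivity of $\Drinfeld$---the defining feature of factorizability---is not used for the present lemma; it will only be invoked later, when this homomorphism is promoted to an isomorphism of $\Ch$ onto $\cZ$.
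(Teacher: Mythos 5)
Your proposal is correct and follows essentially the same route as the paper: centrality of $\Drinfeld(\beta)$ for $\beta\in\Ch$ is extracted from $M\Delta(x)=\Delta(x)M$ together with the $q$-character identity $\beta(xy)=\beta(S^2(y)x)$, and multiplicativity from $(\Delta\tensor\id)(M)=R_{32}M_{13}R_{23}$ combined with the already-established centrality. Your repackaging of the first step as the equivariance $\Drinfeld(y\acts\beta)=\ad(y)\,\Drinfeld(\beta)$ followed by passage to invariants is only a cosmetic variant of the paper's direct Sweedler computation, and your explicit-index version of the second step is the same calculation the paper writes in operator form.
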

\begin{proof}\small
  We first show that $\Drinfeld(\beta)$ is central for any
  $\beta\in\Ch$: for any $x\in U$, we calculate $\Drinfeld(\beta)x
  =\sum_I\beta(\pbwd_I)\pbwdd_I x =\sum_I\beta(\pbwd_I
  x''S^{-1}(x'))\pbwdd_I x'''$.  But because $M\Delta(x)=\Delta(x)M$
  and $\beta(xy)=\beta(S^2(y)x)$, we obtain that $\Drinfeld(\beta)x
  =\sum_I\beta(S(x')x''\pbwd_I)x'''\pbwdd_I=x\Drinfeld(\beta)$.

  Next, to show that $\Drinfeld:\Ch\to\cZ$ is a homomorphism of
  associative algebras, we recall that the product of two functionals
  is defined as $\beta\gamma(x)=(\beta\tensor\gamma)(\Delta(x))$, and
  therefore, using the first relation in~\eqref{M-commutes}, we have
  $\Drinfeld(\beta\gamma)
  =(\beta\tensor\gamma\tensor\id)((\Delta\tensor\id)(M))
  =(\beta\tensor\gamma\tensor\id)(R_{32}M_{13}R_{23})
  =(\gamma\tensor\id)(R_{21}\Drinfeld(\beta)R_{12})
  =\Drinfeld(\beta)(\gamma\tensor\id)(R_{21}R_{12})
  =\Drinfeld(\beta)(\gamma\tensor\id)(M)
  =\Drinfeld(\beta)\Drinfeld(\gamma)$.\footnote{It was noted
    in~\cite{[Schn]} that ``Drinfeld's proof of~\cite[3.3]{[Drinfeld]}
    shows more than what is actually stated
    in~\cite[3.3]{[Drinfeld]}.''  It actually follows that
    $\Drinfeld(\beta\gamma)=\Drinfeld(\beta)\Drinfeld(\gamma)$
    whenever $\beta\in\Ch(U)$ and $\gamma\in U^*$.}
\end{proof}

For the quantum groups considered here, the above homomorphism is in
fact an \textit{isomorphism} (cf.~\cite{[Drinfeld],[Schn]}).

\subsection{Kazhdan--Lusztig-dual quantum groups: Drinfeld's double,
  $M$-matrix, and $R$-matrix}\label{sec:R2M}
The quantum groups $U$ originating from logarithmic conformal models
\textit{are not quasitriangular}, but are nevertheless factorizable in
the following sense: the $M$-matrix can be expressed through an $R$
that is the universal $R$-matrix of a somewhat larger quantum
group~$\bar D$.\footnote{The standard definition of a factorizable
  quantum group~\cite{[RSts]} involves the universal $R$-matrix as
  well, which is the reason why we express some caution; the
  $R$-matrix in the $M$-matrix property~\eqref{M-commutes} is
  \textit{not} an element of $U\tensor U$.  In particular, $U$ is not
  unimodular in our case (but $U^*$ is!).}  This is true for both
$\UresSL2$ and $\XXX$, with the extension to $\bar D$ realized in each
case by introducing the generator $k=K^{1/2}$.  In other words, in
each case, there is a \textit{quasitriangular} quantum group $\bar D$
with a set of generators $k,\dots$, with a universal $R$-matrix $R$,
such that $R_{21}R_{12}$ turns out to belong to $U\tensor U$, where
$U$ is the Hopf subalgebra in $\bar D$ generated by $K=k^2$ and the
other $\bar D$ generators.  In the respective cases, $U$ is either
$\UresSL2$ or $\XXX$.

The universal $R$-matrix for $\bar D$, in turn, comes from
constructing the Drinfeld double~\cite{[Dr-87]} of the quantum group
$B$ generated by screenings in the logarithmic conformal field
model~\cite{[FGST],[FGST3]}.\footnote{The screenings generate only the
  upper-triangular subalgebra of the Kazhdan--Lusztig-dual quantum
  group; to these upper-triangular subalgebra, we add Cartan
  generator(s) constructed from zero modes of the free fields
  involved in the chosen free-field realization.  This gives the $B$
  quantum group.}  For $(p,1)$ models, $B$ is the Taft Hopf algebra
with generators $E$ and~$k$, with $kEk^{-1}=\q E$, $E^p=0$, and
$k^{4p}=\one$.  We then take the dual space $B^*$, which is a Hopf
algebra with the multiplication, comultiplication, unit, counit, and
antipode given~by
\begin{equation}\label{double-def}
  \begin{gathered}
    \coup{\beta\gamma}{x}=\coup{\beta}{x'}
    \coup{\gamma}{x''},\quad
    \coup{\Delta(\beta)}{x\tensor y}=\coup{\beta}{yx},\\
    \coup{1}{x}=\epsilon(x),\quad \epsilon(\beta)=\coup{\beta}{1},\quad
    \coup{S(\beta)}{x}=\coup{\beta}{S^{-1}(x)}
  \end{gathered}
\end{equation}
for $\beta,\gamma\in B^*$ and $x,y\in B$.  The Drinfeld double $D(B)$
is a Hopf algebra with the underlying vector space $B^*\tensor B$ and
with the multiplication, comultiplication, unit, counit, and antipode
given by those in $B$, by Eqs.~\eqref{double-def}, and~by
\begin{equation}\label{double-def-1}
  x\beta=\beta(S^{-1}(x''')?x')x'',
  \qquad x\in B,\quad\beta\in B^*.
\end{equation}
The resulting Hopf algebra $D(B)$ is canonically endowed with the
universal $R$-matrix~\cite{[Dr-87]}.

The doubling procedure also introduces the dual $\varkappa$ to the
Cartan element $k$, which is then to be eliminated by passing to the
quotient over (the Hopf ideal generated by) $k\kappa-\one$ (it follows
that $k\kappa$ is central in the double).  The quotient $\bar D$ is
still quasitriangular, but evaluating the $M$-matrix and the ribbon
element for it shows that they are turn out to be those for the (Hopf)
subalgebra generated by $K\equiv k^2$ and the other $\bar D$
generators, which is finally the Kazhdan--Lusztig-dual quantum group.
This was how the Kazhdan--Lusztig-dual quantum groups, together with
the crucial structures on them, were derived in~\cite{[FGST],[FGST3]}.

For $\UresSL2$, for example, the $M$-matrix is explicitly expressed in
terms of the PBW basis as
\begin{multline}\label{bar-M}
  M =\ffrac{1}{2p}
  \sum_{m=0}^{p-1}\sum_{n=0}^{p-1}
  \sum_{i=0}^{2p-1}\sum_{j=0}^{2p-1}
  \ffrac{(\q - \q^{-1})^{m + n}\!\!}{[m]! [n]!}\,
  \q^{m(m - 1)/2 + n(n - 1)/2}\\*
  \times \q^{- m^2 - m j + 2n j - 2n i - i j + m i} 
  F^{m} E^{n} K^{j}\tensor E^{m} F^{n} K^{i}.
\end{multline}

\subsection{Drinfeld-map images of traces and pseudotraces}
In a factorizable Hopf algebra, it follows that the Drinfeld-map
images of the traces over irreducible representations form an algebra
isomorphic to the Grothendieck ring.  Thus, there are central elements
\begin{alignat*}{2}
  \Drinfeld^\pm_r&=\Drinfeld(\Tr_{\repX^{\pm}_r}(\balance^{-1}?)),
  &\quad &1\leq r\leq p\\[-4pt]
  \intertext{for $\UresSL2$ and }
  \Drinfeld^\pm_{r,r'}&=\Drinfeld(\Tr_{\repX^{\pm}_{r,r'}}(\balance^{-1}?)),
  &\quad &1\leq r\leq p, \ 1\leq r'\leq p'
\end{alignat*}
for $\XXX$, which satisfy the respective algebra~\eqref{the-fusion}
and~\eqref{the-Groth}.

\subsubsection{}In $\UresSL2$, for example, Eq.~\eqref{bar-M} allows
us to calculate the $\Drinfeld^{\alpha}_{s}$ explicitly,
\begin{multline}\label{the-cchi}
  \Drinfeld^{\alpha}_{s} = \alpha^{p+1}(-1)^{s+1}\sum_{n=0}^{s-1}
  \sum_{m=0}^{n}
  (\q-\q^{-1})^{2m} \q^{-(m+1)(m+s-1-2n)}\times{}\\*
  {}\times\qbin{s\!-\!n\!+\!m\!-\!1}{m}
  \qbin{n}{m} E^m F^m K^{s-1+\beta p - 2n + m}
\end{multline}
(where $\beta=0$ if $\alpha=+1$ and $\beta=1$ if $\alpha=-1$).  In
particular, $\Drinfeld^{+}_{2}=-\hcas$, where $\hcas$ is the Casimir
element, Eq.~\eqref{hCas}.  The fact that the $\Drinfeld^{\pm}_{r}$
given by~\eqref{the-cchi} satisfy Grothendieck-ring
relations~\eqref{the-fusion} implies a certain $q$-binomial identity,
see~\cite{[FGST]}.

\begin{rem}
  The $\UresSL2$ Casimir element satisfies the minimal polynomial
  relation $\Psi_{2p}(\hcas)=0$, with $\Psi_{2p}$ in~\eqref{Psi}.
  This relation, with $p-1$ multiplicity-$2$ roots of $\Psi_{2p}$,
  allows constructing a basis in the center $\cZ$ of $\UresSL2$
  consisting of primitive idempotents $\idem_r$ and elements $\nilp_r$
  in the radical of the associative commutative
  algebra~$\cZ$~\cite{[FGST]} (see~\cite{[Kerler]} and
  also~\cite[Ch.~V.2]{[Gant]}).  For this, we define the polynomials
  \begin{gather*}
    \psi_0(x)=(x-\hbeta_p)\prod_{r=1}^{p-1}(x-\hbeta_r)^2,
    \qquad
    \psi_p(x)=(x-\hbeta_0)\prod_{r=1}^{p-1}(x-\hbeta_r)^2,
    \\[-4pt]
    \psi_s(x)=(x-\hbeta_0)\,(x-\hbeta_p)
    \prod_{\substack{r=1\\
        r\neq s}}^{p-1}(x-\hbeta_r)^2,\quad 1\leq s\leq p-1,    
  \end{gather*}
  where we recall that all $\hbeta_j$ are distinct.  Then the
  canonical elements in the radical of~$\cZ$ are
  \begin{align*}
    \nilp^{\pm}_s&=\pi^{\pm}_s \nilp_s,\qquad
    1\leq s\leq p-1,\\[-4pt]
  \intertext{where}
  \nilp_s &= \smash[t]{\ffrac{1}{\psi_s(\hbeta_s)}
    \bigl(\hcas-\hbeta_s\bigr)\psi_s(\hcas)}
  \end{align*}
  and we introduce the projectors
  \begin{equation*}
    \pi^+_s=\ffrac{1}{2p}\sum_{n=0}^{s-1}
    \sum_{j=0}^{2p-1}\q^{(2n-s+1)j}K^j
    ,\qquad
    \pi^-_s= \ffrac{1}{2p}\sum_{n=s}^{p-1}
    \sum_{j=0}^{2p-1}\q^{(2n-s+1)j}K^j,
  \end{equation*}
  and the canonical central idempotents are given by
  \begin{equation*}
    \idem_s=\ffrac{1}{\psi_s(\hbeta_s)}
    \bigl(\psi_s(\hcas) - \psi'_s(\hbeta_s)\nilp_s\bigr),
    \quad 0\leq s\leq p,
  \end{equation*}
  where we formally set $\nilp_0=\nilp_p=0$.  \begin{otherqg}{}A
    similar construction exists for the center of
    $\XXX$~\cite{[FGST-q]}, where, in particular, there are not two
    but four types of projectors $\piUp_{r, r'}$, $\piLeft_{r, r'}$,
    $\piRight_{r, r'}$, and $\piDown_{r, r'}$; for either algebra,
    these are projectors on the weights occurring in irreducible
    modules in the full subcategory labeled by the subscript%
  \end{otherqg}.
\end{rem}

\subsubsection{``Drinfeld'' basis}Applied to the basis $\gamma_A$ of
traces and pseudotraces in $\Ch$, the Drinfeld map gives a basis
\begin{equation*}
  \Drinfeld_A=\Drinfeld(\gamma_A)
\end{equation*}
in the center~$\cZ$.

This ``Drinfeld'' basis (which is not defined uniquely because
pseudotraces are not defined uniquely) specifies an explicit splitting
of the associative commutative algebra $\cZ$ into the Grothendieck
ring and its linear complement.  The products of the Grothendieck ring
elements with elements from the complement may also be of significance
in the Kazhdan--Lusztig context.  The full algebra of $q$-characters
(traces and pseudotraces) for $\UresSL2$, mapped into the center by
the Drinfeld map, is evaluated in~\cite{[GT]}; it can be understood as
a generalized fusion, to be compared with a recent calculation to this
effect in the logarithmic $(p,1)$ models in~\cite{[FK]}.

Under $\modS\in\SLiiZ$ acting as in~\eqref{3-diagram}, clearly, the
Drinfeld basis elements are mapped into the Radford basis,
\begin{equation*}
  \modS:\Drinfeld_A\mapsto\Radford_A.
\end{equation*}
Realizing $\modT\in\SLiiZ$ on the center requires yet another
structure, the ribbon element.

\subsection{Ribbon structure}
A \textit{ribbon element} \cite{[RT]} is a $\ribbon\in\cZ$ such that
\begin{equation*}
  \Delta(\ribbon)=M^{-1}(\ribbon\tensor\ribbon),
\end{equation*}
with $\counit(\ribbon)=1$ and $S(\ribbon)=\ribbon$ (and $\ribbon^2=
\sqs S(\sqs)$, see~\bref{app:S2}).  The procedure for finding the
ribbon element involves two steps: we first find the canonical
element~\eqref{canon-sqs} (which involves the universal $R$-matrix for
the larger, quasitriangular quantum group~$\bar D$ mentioned
in~\bref{sec:R2M}) and then evaluate the balancing element $\balance$
(see~\bref{sec:balancing}) in accordance with Drinfeld's
Lemma~\eqref{Dr-lemma}, from the comodulus obtained from the explicit
expression for the integral (this is the job done by the comodulus).
Then
\begin{equation}\label{balance-ribbon}
  \ribbon=\sqs\balance^{-1}.
\end{equation}
It follows, again, that $\ribbon$ is an element of a Hopf subalgebra
in $\bar D$, which is $\UresSL2$ or~$\XXX$.

\subsubsection{}For $\UresSL2$, where $\balance=K^{p+1}$, we
have~\cite{[FGST]}
\begin{equation}\label{sl2-ribbon}
  \ribbon
  =\sum_{s=0}^{p}(-1)^{s+1} \q^{-\half(s^2-1)}\idem_s
  +\sum_{s=1}^{p - 1} (-1)^p \q^{-\half(s^2 - 1)}
  [s]\,\ffrac{\q - \q^{-1}}{\sqrt{2 p}}\,\vvarphi_{s},
\end{equation}
where $\idem_s$ are the canonical idempotents in the center and 
\begin{equation}\label{vvraphi}
  \vvarphi_{s} =
  \ffrac{p-s}{p}\,\Radford^{\,+}_{s} - \ffrac{s}{p}\,\Radford^{\,-}_{p-s},
  \quad 1\leq s\leq p-1,
\end{equation}
are nilpotent central elements expressed through the Radford-map
images $\Radford^{\,\pm}_{s}$ of the (traces over) irreducible
representations $\repX^{\pm}_s$.

\begin{rem}
  The above form of $\ribbon$ implies that~\cite{[FGST2]}
  \begin{equation*}
    \ribbon=e^{2i\pi L_0}
  \end{equation*}
  (where $L_0$ is the zero-mode Virasoro generator in the $(p,1)$
  logarithmic conformal model); in particular, the exponents involving
  $s^2$ in~\eqref{sl2-ribbon} are simply related to conformal
  dimensions of primary fields.  Rather interestingly, the
  nonsemisimple action of $L_0$ on the lattice vertex operator algebra
  underlying the construction of the logarithmic $(p,1)$ model is thus
  correlated with the decomposition of the ribbon element with respect
  to the central idempotents and nilpotents.
\end{rem}

\subsubsection{}
\begin{otherqg}{}
  For $\XXX$, the ribbon element is given by
  \begin{equation*}
    \ribbon=
    \sum_{(r,r')\in\setI}\!\!\!
    e^{2i\pi\Delta_{r,r'}}\Idem_{r,r'} +\text{nilpotent terms}, 
  \end{equation*}
  where $\Idem_{r,r'}$ are the $\half(\pplus \,{+}\,1)(\pminus \,{+}\,1)$
  primitive idempotents in the associative commutative algebra $\cZ$
  \textup{(}the explicit form of the nilpotent terms being not very
  illuminating at this level of detail, see~\cite{[FGST-q]} for the
  full formula\textup{)}, and
  \begin{equation*}
    \Delta_{r,r'}
    =\ffrac{(\pplus r'\,{-}\,\pminus r)^2-(\pplus\,{-}\,\pminus)^2}{
      4\pplus\pminus} 
  \end{equation*}
  are conformal dimensions of primary fields borrowed from the
  logarithmic model~\cite{[FGST3]}.
\end{otherqg}

\section{Modular group action}\label{sec:modular}
\subsection{Defining the action} In defining the modular group action
on the center we follow~\cite{[Lyu],[LM],[Kerler]} with an
insignificant variation in the definition of $\modT$, introduced
in~\cite{[FGST],[FGST3]} in order to simplify comparison with the
modular group representation generated from characters of the chiral
algebra in the corresponding logarithmic conformal model.  On the
quantum group center, the $\SLiiZ$-action is defined by
\begin{equation}\label{theST}
  \begin{split}
    \modS:{}& x\mapsto \Radford\bigl(\Drinfeld^{-1}(x)\bigr),\\
    \modT:{}& x\mapsto\AT\,\modS(\ribbon\modS^{-1}(x)),
  \end{split}
\end{equation}
where $c$ is the central charge of the conformal model,
\begin{otherqg}{}
  e.g.,
  \begin{equation*}
    c=13-6\ffrac{\pplus }{\pminus } - 6\ffrac{\pminus }{\pplus }
  \end{equation*}
  for the $(\pplus,\pminus)$ model%
\end{otherqg}.\footnote{Reversing the argument, for a factorizable
  ribbon quantum group that can be expected to correspond to a
  conformal field model, the normalization of $\modT$ (i.e., the
  factor accompanying the ribbon element) may thus indicate the
  central charge, and the decomposition of the ribbon element into the
  basis of primitive idempotents and elements in the radical is
  suggestive about the conformal dimensions.}

\subsection{Calculation results} The result of
evaluating~\eqref{theST} in each case gives the structure of the
$\SLiiZ$ representation of the type that was first noted
in~\cite{[Kerler]} for the small quantum $s\ell(2)$.\footnote{The
  small quantum groups have been the subject of some constant
  interest, see, e.g.,~\cite{[L-center],[BL],[O]} and the references
  therein.}

\subsubsection{}On the center of $\UresSL2$, the $\SLiiZ$
representation is given by~\cite{[FGST]}
\begin{equation}\label{SL2Zrep-p1}
  \cZ_{\UresSL2} = R_{p+1}\oplus\oC^2\tensor R_{p-1},
\end{equation}
where $\oC^2$ is the defining two-dimensional representation,
$\mathcal{R}_{p-1}$ is a $(p\,{-}\,1)$-dimen\-sional
$\SLiiZ$-representation (the ``$\sin\!\frac{\pi r s}{p}$''
representation, in fact, the one on the unitary $\hSL2_{k}$-characters
at the level $k\,{=}\,p\,{-}\,2$), and $\mathcal{R}_{p+1}$ is a
``$\cos\!\frac{\pi r s}{p}$'' \ $(p\,{+}\,1)$-dimensional
representation.

\begin{otherqg}{}
  On the center of $\XXX$, the $\SLiiZ$-representation structure is
  given by~\cite{[FGST-q]}
  \begin{equation}\label{SL2Zrep-pq}
    \cZ_{\XXX}
    =\Rmin\oplus\Rproj\oplus\oC^2\tensor(\RLeft\oplus\RRight)\oplus
    \oC^3\tensor\Rmin,
  \end{equation}
  where $\oC^3$ is the symmetrized square of $\oC^2$, $\Rmin$ is the
  $\frac{1}{2}(\pplus \,{-}\,1)(\pminus\,{-}\,1)$-dimensional
  $\SLiiZ$-representation on the characters of the \textit{rational}
  $(\pplus,\pminus )$ Virasoro model, and $\Rproj$, $\RLeft$, and
  $\RRight$ are certain $\SLiiZ$ representations of the respective
  dimensions $\frac{1}{2}(\pplus \,{+}\,1)(\pminus \,{+}\,1)$,
  $\frac{1}{2}(\pplus\,{+}\,1)(\pminus\,{-}\,1)$, and
  $\frac{1}{2}(\pplus\,{-}\,1)(\pminus\,{+}\,1)$.
\end{otherqg}

As noted above, \eqref{SL2Zrep-p1} and~\eqref{SL2Zrep-pq} coincide
with the respective $\SLiiZ$-representations on generalized characters
of $(p,1)$ and $(\pplus,\pminus)$ logarithmic conformal field models
evaluated in~\cite{[FGST],[FGST3]}.

\subsubsection{}
The role of the subrepresentations identified in~\eqref{SL2Zrep-p1}
and~\eqref{SL2Zrep-pq} is yet to be understood from the quantum-group
standpoint, but it is truly remarkable in the context of the
Kazhdan--Lusztig correspondence.  The occurrence of the $\oC^n$ tensor
factors is rigorously correlated with the fact that the
$\psi_{b'}(\tau)$ functions in~\eqref{modS-char1}--\eqref{modS-char2}
are given by (certain linear combinations of) characters \textit{times
  polynomials in~$\tau$ of degree~$n\,{-}\,1$.}

In the quantum group center, the subrepresentations
in~\eqref{SL2Zrep-p1} and~\eqref{SL2Zrep-pq} are described as the span
of certain combinations of the elements of ``Radford'' and
``Drinfeld'' bases $\Radford_A$ and $\Drinfeld_A$.  For $\UresSL2$, in
particular, the central elements~\eqref{vvraphi}, together with their
$\modS$-images $\frac{p-s}{p}\,\Drinfeld^{\,+}_{s} -
\frac{s}{p}\,\Drinfeld^{\,-}_{p-s}$, $1\leq s\leq p\,{-}\,1$, span the
$\oC^2\tensor R_{p-1}$ representation; in the logarithmic $(p,1)$
model, the same representation is realized on the $2(p\,{-}\,1)$
functions
\begin{equation*}
  \tau\bigl(\ffrac{p-s}{p}\,\chi^{\,+}_{s}(\tau) -
  \ffrac{s}{p}\,\chi^{\,-}_{p-s}(\tau)\bigr),\qquad
  \ffrac{p-s}{p}\,\chi^{\,+}_{s}(\tau) -
  \ffrac{s}{p}\,\chi^{\,-}_{p-s}(\tau),
\end{equation*}
where $\chi^\pm_r(\tau)$ are the triplet algebra
characters~\cite{[FGST]}.  On the other hand, the
$(p\,{+}\,1)$-dimen\-sion\-al representation $R_{p+1}$ in the center
is linearly spanned by $\Drinfeld^{\pm}_{p}$ and $\Drinfeld^{+}_{s} +
\Drinfeld^{-}_{p-s}$, $1\leq s\leq p\,{-}\,1$ (the ideal already
mentioned after~\eqref{basis-P}); in the $(p,1)$ model, the same
representation is realized on the linear combinations of characters
\begin{equation*}
  \chi^{\pm}_{p}(\tau),\qquad
  \chi^{+}_{s}(\tau) + \chi^{-}_{p-s}(\tau).
\end{equation*}

\begin{otherqg}{}
  The $\XXX$ setting in~\cite[Sec.~\textbf{5.3}]{[FGST-q]} gives
  rather an abundant picture of how the various traces \textit{and
    pseudotraces}, mapped into the center, combine to produce the
  subrepresentations and how precisely these linear combinations
  correspond to the characters \textit{and extended characters} in the
  logarithmic $(\pplus,\pminus)$ model.\footnote{Once again:
    $\oC$-linear combinations of the
    $\half(3\pplus\,{-}\,1)(3\pminus\,{-}\,1)$ traces and pseudotraces
    (mapped to the center) carry the same $\SLiiZ$-representations as
    certain $\oC[\tau]$-linear combinations of the $2\pplus\pminus +
    \half(\pplus\,{-}\,1)(\pminus\,{-}\,1)$ \textit{characters} of the
    $W$-algebra; the total dimension is
    $\half(3\pplus\,{-}\,1)(3\pminus\,{-}\,1)$ in either case.}  Here,
  we only note the $\Rproj$ representation, of dimension $\half(\pplus
  \,{+}\,1)(\pminus\,{+}\,1)$, linearly spanned by
  $\Drinfeld^{+}_{r,r'} + \Drinfeld^{-}_{\pplus-r,r'} +
  \Drinfeld^{-}_{r,\pminus-r'} + \Drinfeld^{+}_{\pplus-r,\pminus-r'}$
  (with $(r,r')\in\setR$, where
  $\vert\setR\vert=\half(\pplus-1)(\pminus-1)$),
  $\Drinfeld^{+}_{r,\pminus} + \Drinfeld^{-}_{\pplus - r,\pminus}$
  (with $1\leq r\leq \pplus-1$), $\Drinfeld^{+}_{\pplus,r'} +
  \Drinfeld^{-}_{\pplus,\pminus-r'}$ ($1\leq r'\leq \pminus\,{-}\,1$),
  and $\Drinfeld^{\pm}_{\pplus,\pminus}$.  In the logarithmic
  $(\pplus,\pminus)$ model, the same $\SLiiZ$-representation is
  realized on the linear combinations of $W$-algebra characters
  \begin{alignat*}{2}
    &\chi_{r,r'}(\tau)+ 2\chi^+_{r,r'}(\tau)
    +2\chi^-_{r,\pminus-r'}(\tau) +2\chi^-_{\pplus -r,r'}(\tau)
    +2\chi^+_{\pplus -r,\pminus-r'}(\tau),&\quad&(r,r')\in\setii,
    \\
    & 2\chi^+_{\pplus, \pminus-r'}(\tau)+2\chi^-_{\pplus, r'}(\tau),
    &&1\leq r'\leq \pminus\,{-}\,1,
    \\
    & 2\chi^+_{\pplus -r,\pminus}(\tau)
    +2\chi^-_{r,\pminus}(\tau),&&1\leq r\leq \pplus \,{-}\,1,
    \\
    &2\chi^{\pm}_{\pplus, \pminus}(\tau)&
  \end{alignat*}
  (with the same size
  $|\setii|=\half(\pplus\,{-}\,1)(\pminus\,{-}\,1)$ of the index set),
  where $\chi_{r,r'}(\tau)$ are the characters of the Virasoro
  \textit{rational} model and $\chi^{\pm}_{r,r'}(\tau)$ are the other
  $2\pplus\pminus$ characters of the $W$-algebra~\cite{[FGST3]}.  The
  above combinations do not involve generalized characters (which
  occur where the $\oC^n$ factors are involved in the
  $\SLiiZ$-representation isomorphic to the one in~\eqref{SL2Zrep-pq}
  and which are in fact the origin of these $\oC^n$ factors from the
  conformal field theory standpoint).

  A remarkable feature of the $\SLiiZ$ representation on the $\XXX$
  center is the occurrence of $\Rmin$, the $\SLiiZ$-representation on
  the characters of the rational Virasoro model, even though the
  $\XXX$-representations $\repX^{\pm}_{r,r'}$ are in a $1:1$
  correspondence not with all the primary fields of the $W$-algebra in
  the logarithmic model but just with those \textit{except} the
  rational-model ones.
\end{otherqg}

\subsection{Beyond the quantum group}
Two algebraic structures on the quantum group center are most
important from the standpoint of the Kazhdan--Lusztig correspondence:
the modular group action and the Grothendieck ring (the latter is a
subring in the center spanned by Drinfeld-map images of the
irreducible representations).  The resulting Groth\-endieck rings, or
Verlinde algebras are nonsemisimple.

A classification of Verlinde algebras has been proposed in a totally
different approach, that of double affine Hecke algebras (Cherednik
algebras)~\cite{[Cherednik]}, where Verlinde algebras occur as certain
representations of Cherednik algebras; an important point is that a
modular group action is built into the structure of Cherednik
algebras.  It can thus be expected that the $(p,1)$-model fusion (the
$\UresSL2$ Grothendieck ring)~\eqref{the-fusion}, of dimension $2p$,
admits a realization associated with a Cherednik algebra
representation.  But because an isomorphic image of the Grothendieck
ring is contained in the center, a natural further question is whether
the entire $\UresSL2$ center, of dimension $3p\,{-}\,1$, endowed with
the $\SLiiZ$ action, is also related to Cherednik algebras.

It was shown in~\cite{[MT]} that the center $\cZ$ of $\UresSL2$, as an
associative commutative algebra and as an $\SLiiZ$ representation, is
indeed extracted from a representation space of the simplest Cherednik
algebra $\mathscr{H}$, defined by the relations
\begin{alignat*}{2}
  TXT&=X^{-1},&\quad TY^{-1}T&=Y,\\
  XY&=\q YX T^2,&\quad (T-\q)(T+\q^{-1})&=0
\end{alignat*}
on the generators $T$, $X$, $Y$, and their inverse.  In these terms,
the $PSL(2,\oZ)$ action is defined by the elements $\tau_+=\bigl(\!
\begin{smallmatrix}
  1&1\\
  0&1
\end{smallmatrix}\!\bigr)$ and $\tau_-=\bigl(\!
\begin{smallmatrix}
  1&0\\
  1&1
\end{smallmatrix}\!\bigr)$ being realized as the $\mathscr{H}$
automorphisms~\cite{[Cherednik]}
\begin{alignat*}{4}
  &\tau_+:& \ X&\mapsto X,&\quad Y&\mapsto \q^{-1/2} X Y,&
  \quad T&\mapsto T,\\
  &\tau_-:& \ X&\mapsto \q^{1/2}YX,&\quad Y&\mapsto Y,&
  \quad T&\mapsto T.
\end{alignat*}

For each $p\geq3$, the authors of~\cite{[MT]} construct a
$(6p-4)$-dimensional (reducible but indecomposable) representation of
$\mathcal{H}$ in which the eigensubspace of $T$ with eigenvalue $\q$
(as before, $\q=e^{i\pi/p}$) is $(3p-1)$-dimensional.  The associative
commutative algebra structure induced on this eigensubspace in
accordance with Cherednik's theory then coincides with the associative
commutative algebra structure on the center $\cZ$ of $\UresSL2$.
Furthermore, the $\SLiiZ$ representations constructed on this space
\`a la Cherednik and \`a la Lyubashenko coincide.  Also, the Radford-
and Drinfeld-map images of irreducible representations in the center
can be ``lifted'' to the level of~$\mathcal{H}$ (as eigenvectors of
$X+X^{-1}$ and $Y+Y^{-1}$ respectively)~\cite{[MT]}.

\section{Conclusions}
Without a doubt, it would be extremely useful to rederive the results
such as the equivalence of modular group representations in a more
``categorical'' approach; this would immediately suggest
generalizations.  But the quantum group ``next in the queue'' after
$\UresSL2$ and $\XXX$ is a quantum $s\ell(2|1)$ (cf.\ the remarks
in~\cite{[S-hsl2]}), which already requires extending many basic facts
(e.g., those in~\cite{[Lyu]}) to the case of quantum
\textit{super}groups.

The center of the Kazhdan--Lusztig-dual quantum group is to be
regarded as the center of the corresponding logarithmic conformal
field model; this calls for applications to boundary states in
logarithmic models.

\subsubsection*{Acknowledgments} I am grateful to M.~Finkelberg and
J.~Fuchs for useful comments and to V.~Schomerus for discussions.
Special thanks go to A.~Gainutdinov for his criticism.  This paper was
supported in part by the RFBR grant~07-01-00523.

\appendix
\section{}
\subsection{The center and $q$-characters}\label{qchar-def}
The center $\cZ$ of a Hopf algebra $U$ can be characterized as the set
\begin{equation}
  \cZ=\bigl\{ y\in U \bigm| \ad_x(y)
  =\epsilon(x)y\quad \forall x\in U\bigr\}.
\end{equation}
The space of $q$-characters $\Ch=\Ch(U)\subset U^*$ is defined as
\begin{multline}\label{Ch-def}
  \Ch=\bigl\{\beta\in U^* \bigm| \ad^*_x(\beta)
  =\counit(x)\beta\quad \forall x\in U\bigr\}\\
  = \bigl\{\beta\in U^* \bigm| \beta(xy)=\beta\bigl(S^2(y)x\bigr)
  \quad \forall x,y\in U\bigr\},
\end{multline}
where the coadjoint action $\ad^*_a:U^*\to U^*$ is
$\ad^*_a(\beta)=\beta\bigl(S(a')?a''\bigr)$, $a\in U$,
$\beta\in U^*$.

\subsection{Quasitriangular Hopf algebras}\label{app:quasi-tr}
Quasitriangular (or braided) Hopf algebras were introduced
in~\cite{[Dr-87]} (also see~\cite{[Rad-quasitr]}).  A quasitriangular
Hopf algebra $U$ has an invertible element $R\,{\in}\, U\tensor U$
satisfying
\begin{gather}
  \Delta^{\mathrm{op}}(x)=R\Delta(x) R^{-1},\label{eq:def-prop-R-d}
  \\
  (\Delta\otimes\id)(R)= R_{13} R_{23},\label{eq:quasi-1}
  \\
  (\id\otimes\Delta)(R)= R_{13} R_{12},\label{eq:quasi-2-d}
  \\
  \intertext{the Yang--Baxter equation}
  R_{12}R_{13}R_{23}=R_{23}R_{13}R_{12},
\end{gather}
and the relations
$(\epsilon\tensor\id)(R)=\one=(\id\tensor\epsilon)(R)$, $(S\tensor
S)(R)=R$.

\subsection{Square of the antipode~\cite{[Drinfeld]}}
\label{app:S2}
In any quasitriangular Hopf algebra, the square of the antipode is
represented by a similarity transformation
\begin{equation*}
  S^2(x)=  \sqs x\sqs^{-1},
\end{equation*}
where the \textit{canonical element} $\sqs$ is given by
\begin{gather}\label{canon-sqs}
  \sqs= \cdot\bigl((S\tensor\id)R_{21}\bigr),\quad
  \sqs^{-1}=\cdot\bigl((S^{-1}\tensor S)R_{21}\bigr)
\end{gather}
(where $\cdot(a\tensor b)=ab$) and satisfies the property
\begin{gather}\label{Delta-u}
  \Delta(\sqs)= M^{-1}(\sqs\tensor\sqs)=(\sqs\tensor\sqs)M^{-1}
\end{gather}
(where we recall that $M=R_{21}R_{12}$).


\subsection{Balancing element}\label{sec:balancing}
We also need the so-called \textit{balancing element} $\balance\in U$
that satisfies~\cite{[Drinfeld]}
\begin{equation}\label{balance-prop}
  \begin{gathered}
    S^2(x)=\balance x\balance^{-1}\quad\forall x\in U,\\
    \Delta(\balance)=\balance\tensor\balance,
  \end{gathered}
\end{equation}

The balancing element $\balance$ allows constructing the ``canonical''
$q$-character associated with any (irreducible, because traces are
insensitive to indecomposability) representation~$\repX$ as the
(``quantum'') trace
\begin{equation}\label{qCh}
  \Tr_{\repX}(\balance^{-1}?)
  \in\Ch(U).
\end{equation}

For a Hopf algebra $U$ with a right integral $\rint$, we recall the
definition of a comodulus in~\bref{sec:comodulus}.  Whenever a square
root of the comodulus $\comodul$ can be calculated in $U$, a lemma of
Drinfeld \cite{[Drinfeld]} states that
\begin{equation}\label{Dr-lemma}
  \balance^2=\comodul.
\end{equation}

\end{document}